 \newcommand{\Rmnum}[1]{\expandafter\@slowromancap\romannumeral #1@}
\newtheorem{theorem}{Theorem}[section]
\newtheorem{proposition}[theorem]{Proposition}
\newtheorem{lemma}[theorem]{Lemma}
\newtheorem{corollary}[theorem]{Corollary}
\newtheorem{remark}[theorem]{Remark}
\newcommand{\R}{{\mathbb R}}
\newcommand{\Z}{{\mathbb Z}}
\newcommand{\C}{{\mathbb C}}
\newcommand{\be}{\begin{equation}}
\newcommand{\ee}{\end{equation}}
\newcommand{\bea}{\begin{eqnarray}}
\newcommand{\eea}{\end{eqnarray}}
\newcommand{\ba}{\begin{array}}
\newcommand{\ea}{\end{array}}
\newcommand{\ol}{\overline}
\newcommand{\id}{\mathbb{I}}
\newcommand{\re}{\mathrm{Re}}
\newcommand{\im}{\mathrm{Im}}
\newcommand{\eps}{\varepsilon}
\newcommand{\sig}{\sigma}
\newcommand{\Sig}{\Sigma}
\newcommand{\lam}{\lambda}
\newcommand{\gam}{\gamma}
\newcommand{\Gam}{\Gamma}
\newcommand{\x}{\xi}
\newcommand{\dta}{\delta}
\newcommand{\Dta}{\Delta}
\newcommand{\tha}{\theta}
\newcommand{\varplon}{\varepsilon}
\numberwithin{equation}{section}
\begin{document}
\title[Long-time asymptotics for DNLS equation]{Long-time asymptotic for the derivative nonlinear Schr\"odinger equation with decaying initial value}

\author[J.Xu]{Jian Xu}
\address{School of Mathematical Sciences\\
Fudan University\\
Shanghai 200433\\
People's  Republic of China}
\email{11110180024@fudan.edu.cn}

\author[E.Fan]{Engui Fan*}
\address{School of Mathematical Sciences, Institute of Mathematics and Key Laboratory of Mathematics for Nonlinear Science\\
Fudan University\\
Shanghai 200433\\
People's  Republic of China}
\email{correspondence author:faneg@fudan.edu.cn}

\keywords{Riemann-Hilbert problem, DNLS equation, Inverse scattering transformation, Initial value problem}

\date{\today}

\begin{abstract}
We present a new Riemann-Hilbert problem formalism for the initial value problem for the derivative nonlinear Schr\"odinger (DNLS) equation:
 \[
 iq_t(x,t)+q_{xx}(x,t)+i(|q|^2q)_x=0
 \]
on the line. We show that the solution of this initial value problem can be obtained from the solution of some associated Riemann-Hilbert problem. This new Riemann-Hilbert problem for the DNLS equation will lead us to use nonlinear steepest-descent/stationary phase method or Deift-Zhou method to derive the long-time asymptotic for the DNLS equation on the line.

\end{abstract}

\maketitle

\section{Introduction}

The main purpose of this paper is to develop an inverse scattering approach, based on an appropriate Riemann-Hilbert problem formulation, for the initial value problem for the derivative nonlinear Schr\"odinger (DNLS) equation \cite{kn} on the line, whose form is:
\begin{subequations}\label{DNLSandIv}
\be \label{DNLS}
iq_t(x,t)+q_{xx}(x,t)-i(rq^2)_x=0,
\ee
\be \label{DNLSIv}
q(x,0)=q_0(x).
\ee
\end{subequations}
where $r=\pm {\bar q},\bar q $ denotes complex conjugate of $q$, the subscripts denote differentiation with respect to the corresponding variables. And in this paper we use $r=-\bar q$
\par
The DNLS equation has several applications in plasma physics. In plasma physcis, it is a model for Alfv$\acute e$n waves propagating parallel to the ambient magnetic field, $q$ being the transverse magnetic field perturbation and $x$ and $t$ being space and time coordinates, respectively \cite{m}.
\par
Our goal is to develop the inverse scattering approach to the DNLS equation, in view of its further application for studying the long-time asymptotics. The starting point of the approach is the Lax pair representation: the DNLS equation is indeed the compatibility condition of two linear equations \cite{kn}:
\begin{subequations} \label{DNLS-Lax}
\be \label{Lax-x}
v_{1x}+ik^2v_1=qk v_2,\quad v_{2x}-ik^2v_2=rk v_1,
\ee
\be \label{Lax-t}
iv_{1t}=Av_1+Bv_2,\quad iv_{2t}=Cv_1-Av_2,
\ee
\end{subequations}
where $\x \in \C$ is the spectral parameter, and
\be
\ba{lc}
A=2k^4+k^2rq,&B=2ik^3q-k q_x+ik rq^2, \\
C=2ik^3r-k r_x+ik r^2q.&
\ea
\ee
\par
In the present paper, we propose a scattering inverse scattering formalism, in which the Lax pair is used in the form of a system of first order matrix-valued linear equations. Then dedicated solutions of this system are defined and used to construct a Riemann-Hilbert (RH) problem in the complex plane. The main advantage of the representation of a solution of the DNLS equation in terms of the solution of a Riemann-Hilbert problem is that it allows applying the nonlinear steepest descent method by Deift and Zhou \cite{dz} in order to obtain rigorous results on the long-time asymptotic behavior of the solution.
\par
An alternative inverse scattering method based on a Riemann-Hilbert problem formulation for the DNLS equation can be founded in \cite{avkahv} for the Cauchy problem, and in \cite{l} for the initial-boundary value problem on the half-line, in \cite{xf} for the initial-boundary value problem on the interval.
\par
In Section 2, we define appropriate eigenfunctions and spectral functions, which are used in Section 3 in the reformulation of the scattering problem as a Riemann-Hilbert problem of analytic conjugation in the complex plane of the spectral parameter. And in Section 4 we obtain the long-time asymptotic behavior of the solution of DNLS equation by the method of the Deift-Zhou/nonlinear steepest descent based on the new Riemann-Hilbert problem which is obtained in the subsection \ref{newRHPsec}.

\section{Eigenfunctions and spectral functions}

\subsection{Eigenfunctions}

First introducing
\[
\ba{lcr}
\psi =\left(\ba{c}v_1\\v_2\ea \right),& Q =\left(\ba{lr}0&q\\r&0\ea \right),& \sig_3=\left(\ba{lr}1&0\\0&-1\ea \right), \ea
\]
we can rewrite the Lax pair (\ref{DNLS-Lax}) in a matrix form:
\be \label{Lax:DNLS2}
\ba{l}
\psi_x+ik^2 \sig_3 \psi=k Q \psi,   \\
\psi_t+2ik^4 \sig_3 \psi=(-ik^2 Q^2 \sig_3+2 k^3 Q-ik Q_x \sig_3+k Q^3) \psi,
\ea
\ee

\par
Extending the column vector $\psi$ to a $2\times 2$ matrix and letting
\[
\Psi=\psi e^{i(k^2 x+2k^4 t)\sig_3},
\]
we obtain the equivalent Lax pair
\bea \label{Lax:DNLS3}
  && \Psi_x+ik^2 [\sig_3,\Psi]=k Q \Psi,   \nonumber \\
  && \Psi_t+2ik^4 [\sig_3,\Psi]=(-ik^2 Q^2 \sig_3+2 k^3 Q-ik Q_x \sig_3+k Q^3) \Psi,
\eea
which can be written in full derivative form
\be \label{Lax:DNLS4}
d(e^{i(k^2 x+2k^4 t)\hat \sig_3}\Psi(x,t,k))=e^{i(k^2 x+2k^4 t)\hat \sig_3}U(x,t,k)\Psi,
\ee
where
\be
U=U_1dx+U_2dt=k Qdx+(-ik^2 Q^2 \sig_3+2 k^3 Q-ik Q_x \sig_3+k Q^3)dt.
\ee
\par
In order to formulate a Riemann-Hilbert problem for the solution of the inverse spectral problem,we seek solutions of the spectral problem which approach the $2\times 2$ identity matrix  as $k \rightarrow \infty$. It turns out that solutions of Eq.(\ref{Lax:DNLS4}) do not exhibit this property \cite{kn}, hence we have to transform the solution $\Psi$ of Eq.(\ref{Lax:DNLS4}) into the desired asympototic behavior \cite{l}.

Consider a solution of Eq.(\ref{Lax:DNLS4}) of the form
\[
\Psi=D+\frac{\Psi_1}{k}+\frac{\Psi_2}{k^2}+\frac{\Psi_3}{k^3}+O(\frac{1}{k^4}),\quad k \rightarrow \infty
\]
where $D,\Psi_1,\Psi_2,\Psi_3$ are independent of $k$. Substituting the above expansion into the the first equation of
(\ref{Lax:DNLS3}),and comparing the same order of $k$'s frequency, it follows from the $O(k^2)$ terms that D is a diagonal matrix. Furthermore,one finds the following equations for
 the $O(k)$ and the diagonal part of the $O(1)$ terms
\[
O(k):i[\sig_3,\Psi_1]=QD,\quad
i.e. \quad \Psi_1^{(o)}=\frac{i}{2} QD\sig_3,
\]
with $\Psi_1^{(o)}$ being the off-diagonal part of $\Psi_1$,and
\[
O(1):D_x=Q\Psi_1^{(o)},
\]
i.e.
\be \label{2.1.1}
D_x=\frac{i}{2}Q^2\sig_3D.
\ee
\par
On the other hand, substituting the above expansion into the second equation of (\ref{Lax:DNLS3}), one obtains from that
\be \label{2.1.2}
O(k^3):2i[\sig_3,\Psi_1]=2QD,\quad i.e. \quad \Psi_1^{(o)}=\frac{i}{2}QD\sig_3;
\ee
and
\be \label{2.1.3}
O(k): 2i[\sig_3,\Psi_3]=-iQ^2 \sig_3 \Psi_1^{(o)}+2Q\Psi_2^{(d)}-iQ_x \sig_3 D+Q^3D,
\ee
i.e.
\be \label{2.1.4}
-iQ^2 \sig_3 \Psi_2^{(d)}+2Q\Psi_3^{(o)}=-\frac{1}{2}Q^3\Psi_1^{(o)}+\frac{1}{2}QQ_xD+\frac{i}{2}Q^4\sig_3D,
\ee
where $\Psi_2^{(d)}$ denotes the diagonal part of $\Psi_2$; and for the diagonal part of the $O(1)$ terms
\[
O(1):D_t=-iQ^2 \sig_3 \Psi_2^{(d)}+2Q\Psi_3^{(o)}-iQ_x\sig_3\Psi_1^{(o)}+Q^3\Psi_1^{(o)},
\]
again,using (\ref{2.1.2}) and (\ref{2.1.4}),we have
\[
D_t=(\frac{3i}{4}Q^4\sig_3+\frac{1}{2}[Q,Q_x])D,
\]
which can be written in terms of $q$ and $r$ as
\be \label{2.1.5}
D_t=(\frac{3i}{4}r^2 q^2+\frac{1}{2}(r_xq-rq_x))\sig_3D.
\ee

We note that Eq.(\ref{DNLS}) admits the conservation law
\[
(\frac{i}{2}rq)_t=(\frac{3i}{4}r^2 q^2+\frac{1}{2}(r_xq-rq_x))_x.
\]
\par
Because we just consider the Cauchy problem for the DNLS equation (\ref{DNLS}), the two Eqs.(\ref{2.1.1}) and (\ref{2.1.5}) for $D$ are consistent and are both satisfied if we define
\be \label{D}
D(x,t)=e^{i\int_{(-\infty,t)}^{(x,t)}\Dta \sig_3},
\ee
where $\Dta$ is the closed real-valued one-form
\be \label{Dta}
\Dta(x,t)=\frac{1}{2}qrdx+(\frac{3}{4}r^2 q^2-\frac{i}{2}(r_xq-rq_x))dt .
\ee
Noting that the integral in (\ref{D}) is independent of the path of integration and the $\Dta$ is independent of $k$,
then we introduce a new function $\mu$ by
\be \label{mu}
\Psi(x,t,k)=e^{i\int_{(+\infty,t)}^{(x,t)}\Dta \hat \sig_3}\mu(x,t,k)D(x,t),
\ee
Thus,we have
\be \label{muinfty}
\mu=\id+O(\frac{1}{k}),\quad k \rightarrow \infty,
\ee
and the Lax pair of Eq.(\ref{Lax:DNLS4}) becomes
\be \label{Lax:DNLS5}
d(e^{i(k^2 x+2k^4 t)\hat \sig_3}\mu(x,t,k))=W(x,t,k),
\ee
where
\[
W(x,t,k)=e^{i(k^2 x+2k^4 t)\hat \sig_3}V(x,t,k)\mu,
\]
with
\[
V=V_1dx+V_2dt=e^{-i\int_{(+\infty,t)}^{(x,t)}\Dta \hat \sig_3}(U-i\Dta \sig_3).
\]
Taking into account the definition of $U$ and $\Dta$,we find that
\be \label{V1}
V_1=\left (\ba{ll} -\frac{i}{2}rq& k qe^{-2i\int_{(+\infty,t)}^{(x,t)}\Dta}\\
                  k re^{2i\int_{(+\infty,t)}^{(x,t)}\Dta}&\frac{i}{2}rq \ea
                  \right ),
\ee
\be \label{V2}
V_2=\left(\ba{ll} -ik^2rq-\frac{3i}{4}r^2 q^2-\frac{1}{2}(r_xq-rq_x)&(2k^3q+ik q_x+k q^2r)e^{-2i\int_{(+\infty,t)}^{(x,t)}\Dta}\\
                  (2k^3r+ik r_x+k r^2q)e^{2i\int_{(+\infty,t)}^{(x,t)}\Dta}&ik^2rq+\frac{3i}{4}r^2 q^2+\frac{1}{2}(r_xq-rq_x)
    \ea \right).
\ee
Then Eq.(\ref{Lax:DNLS5}) for $\mu$ can be written as
\bea \label{Lax:DNLS6}
&&\mu_x+ik^2[\sig_3,\mu]=V_1\mu,\nonumber \\
&&\mu_t+2ik^4[\sig_3,\mu]=V_2\mu.
\eea

Throughout this section we assume that $q(x,t)$ is sufficiently smooth,we define two solutions of Eq.(\ref{Lax:DNLS5}) by
\be \label{muj}
\mu_j(x,t,k)=\id+\int_{(x_j,t_j)}^{(x,t)}e^{-i(k^2 x+2k^4 t)\hat \sig_3}W(y,\tau,k),\qquad j=1,2,
\ee
where $(x_1,t_1)=(+\infty,t),(x_2,t_2)=(-\infty,t)$,see Figure 1.
\begin{figure}[th]
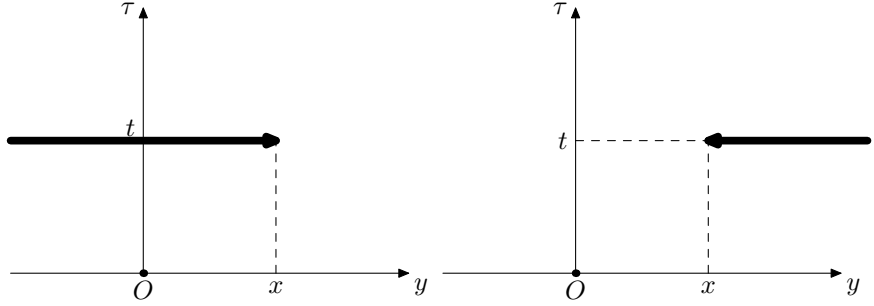

\centering
\includegraphics{IST.1}
\includegraphics{IST.2}
\caption{Paths integrals of the $\mu_1$ and $\mu_2$.}
\end{figure}

\par
The analytic properties of the $2\times 2$ matrices $\mu_j(x,t;k)$, $j=1,2$, that follow from (\ref{muj}), are collected in the following proposition. We denote by $\mu_j^{(1)}(x,t,k)$ and $\mu_j^{(2)}(x,t,k)$ the first and second columns of $\mu_j(x,t;k)$, respectively.
\par
\begin{proposition}\label{pro1}
The matrices $\mu_1(x,t;k)$ and $\mu_2(x,t;k)$ have the folloeing properties:
\begin{enumerate}
\item $\det \mu_1(x,t,k)=\mu_2(x,t;k)=1$.
\item $\mu_1^{(1)}(x,t,k)$ is analytic in $\im k^2<0$ and
      \[
      \mu_1^{(1)}(x,t,k)=\left(\ba{c}1\\0\ea \right)+O(\frac{1}{k}),\mbox{as }k\rightarrow \infty,\quad \im k^2 \le 0.
      \]
\item $\mu_1^{(2)}(x,t,k)$ is analytic in $\im k^2>0$ and
      \[
      \mu_1^{(2)}(x,t,k)=\left(\ba{c}0\\1\ea \right)+O(\frac{1}{k}),\mbox{as }k\rightarrow \infty,\quad \im k^2 \geq 0.
      \]
\item $\mu_2^{(1)}(x,t,k)$ is analytic in $\im k^2>0$, and
      \[
      \mu_2^{(1)}(x,t,k)=\left(\ba{c}1\\0\ea \right)+O(\frac{1}{k}),\mbox{as }k\rightarrow \infty,\quad \im k^2 \geq 0.
      \]
\item $\mu_2^{(2)}(x,t,k)$ is analytic in $\im k^2<0$, and
      \[
      \mu_2^{(2)}(x,t,k)=\left(\ba{c}0\\1\ea \right)+O(\frac{1}{k}),\mbox{as }k\rightarrow \infty,\quad \im k^2 \le 0.
      \]
\item Moreover,
      \[
      \mu_j(x,t,k)=\id+\frac{\tilde \mu(x,t)}{ik}+o(\frac{1}{k})
      \]
      as $k\rightarrow \infty$ along curves transversal to the real and image axis, where
      \[
      [\sig_3,\tilde \mu(x,t)]=\left(\ba{cc}0&\tilde q(x,t)\\-\ol {\tilde q(x,t)}&0\ea \right)
      \]
      and
      \be \label{tldqdef}
      \tilde q(x,t)=q(x,t)e^{-2i\int_{+\infty}^{x}\Dta }
      \ee.
\end{enumerate}
\end{proposition}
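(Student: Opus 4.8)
The plan is to establish each of the six assertions by exploiting the Volterra integral representation (\ref{muj}) together with the structure of $V_1$, $V_2$ and the exponential factors. Here is how I would organize the argument.

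First I would prove the determinant identity (i). Since $\mu_j$ satisfies the Lax pair (\ref{Lax:DNLS6}), I would rewrite the $x$-equation as $(e^{ik^2 x \hat\sig_3}\mu_j)_x = e^{ik^2 x\hat\sig_3}(V_1\mu_j)$ and use Abel's formula: the logarithmic derivative of the determinant of a solution of a linear ODE equals the trace of the coefficient matrix. Because $V_1$ and $V_2$ are traceless (the diagonal entries of each are $\mp\frac{i}{2}rq$ and the $V_2$ diagonal entries are exact negatives of one another), $\det\mu_j$ is independent of $x$ and $t$; evaluating at the base point $(x_j,t_j)$, where $\mu_j=\id$, gives $\det\mu_j\equiv 1$.

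Next, for the analyticity and boundedness claims (ii)--(v), I would analyze the scalar Volterra equations obtained columnwise from (\ref{muj}) along the horizontal path $\tau=t$ (which is legitimate since the one-form is closed). The key is the exponential factor $e^{-2ik^2(x-y)}$ appearing in the off-diagonal entries after conjugation by $e^{-ik^2 x\hat\sig_3}$. For $\mu_1^{(1)}$, whose base point is $(+\infty,t)$, the integration runs over $y>x$, so the relevant exponent $e^{2ik^2(y-x)}$ (and its reciprocal) must stay bounded; I would check that along the integration direction the sign of $\im(k^2(y-x))$ forces boundedness and analyticity precisely in the half-plane $\im k^2<0$, as claimed. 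The analogous sign bookkeeping for the other three columns — tracking which base point ($\pm\infty$) pairs with which column and hence which sign of $\im k^2$ makes the Neumann series converge — yields (iii), (iv), (v). The large-$k$ expansions then follow by reading off the leading term $\id$ and estimating the integral remainder as $O(1/k)$ via integration by parts or the explicit $k$-dependence of $V_1,V_2$.

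Finally, for (vi) I would substitute the asymptotic ansatz $\mu_j=\id+\tilde\mu/(ik)+o(1/k)$ into the $x$-part of (\ref{Lax:DNLS6}) and match the $O(1)$ terms. The commutator $ik^2[\sig_3,\mu]$ contributes $k[\sig_3,\tilde\mu]$ at order $k^1$, which must balance the leading (order $k$) off-diagonal part of $V_1\mu$, namely the entries $kqe^{-2i\int\Dta}$ and $kre^{2i\int\Dta}$. Setting $r=-\bar q$ and recognizing $\tilde q=qe^{-2i\int_{+\infty}^x\Dta}$ from (\ref{tldqdef}) then reproduces the stated off-diagonal form of $[\sig_3,\tilde\mu]$.

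The main obstacle I expect is the careful half-plane bookkeeping in step two: one must verify, uniformly in the integration variable, that the exponential kernel decays (or at least stays bounded) over the correct semi-infinite ray for each column, and confirm that $\im k^2$ — rather than $\im k$ — is the governing quantity. Getting the four sign conditions consistent with the stated domains, and justifying uniform convergence of the Neumann series up to the boundary $\im k^2=0$ so that the asymptotics hold with the closed inequalities, is the delicate part; the determinant and large-$k$ matching steps are comparatively routine.
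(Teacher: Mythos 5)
Your proposal is correct and is essentially the argument the paper itself relies on: the paper gives no written proof of this proposition (it is stated as following directly from the integral representation (\ref{muj})), and your outline — Abel's/Liouville's formula with the traceless coefficients $V_1,V_2$ for claim (i), columnwise analysis of the Volterra/Neumann series with the conjugating exponentials $e^{\pm 2ik^2(x-y)}$ for claims (ii)--(v), and order-by-order matching of the expansion in the $x$-equation of (\ref{Lax:DNLS6}) (using $r=-\bar q$ and the realness of $\Dta$) for claim (vi) — is exactly the standard way to fill that in. One small bookkeeping correction: for $\mu_1^{(1)}$, with base point $(+\infty,t)$ and $y>x$, the factor that must stay bounded is $e^{2ik^2(x-y)}$ rather than $e^{2ik^2(y-x)}$, and its boundedness is what forces $\im k^2\le 0$; with that sign fixed, your four half-plane assignments match (ii)--(v) as stated.
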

Since the eigenfunctions $\mu_1(x,t,k)$ and $\mu_2(x,t,k)$ satisfy both equations of the Lax pair, we have
\be \label{psiphirel}
\mu_2(x,t,k)=\mu_1(x,t,k)S(k),\qquad k^2\in \R,
\ee
where $S(k)$ is independent of $(x,t)$ and is defined in (\ref{Skdefforab})

\begin{proposition}(Symmetries)
For $j=1,2$, the function $\mu(x,t,k)=\mu_j(x,t,k)$ satisfies the symmetry relations:
\be \label{sym}
\ba{c}
\mu_{11}(x,t,k)=\ol {\mu_{22}(x,t,\bar k)}, \\
\mu_{21}(x,t,k)=\ol {\mu_{12}(x,t,\bar k)},
\ea
\ee
as well as
\be \label{evenodd}
\ba{c}
\mu_{11}(x,t,-k)=\mu_{11}(x,t,k), \\
\mu_{12}(x,t,-k)=-\mu_{12}(x,t,k), \\
\mu_{21}(x,t,-k)=-\mu_{21}(x,t,k), \\
\mu_{22}(x,t,-k)=\mu_{22}(x,t,k).
\ea
\ee
\end{proposition}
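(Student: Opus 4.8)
The plan is to prove both families of relations by the standard symmetry-reduction argument: recast each scalar identity as a single matrix identity, verify that the transformed matrix solves the same linear system (\ref{Lax:DNLS6}) with the same value $\id$ at the base point $(x_j,t_j)$, and then invoke uniqueness of the solution of the Volterra equation (\ref{muj}). In matrix form the relations (\ref{evenodd}) read $\mu(x,t,-k)=\sig_3\,\mu(x,t,k)\,\sig_3$, while the relations (\ref{sym}) read $\mu(x,t,k)=\sig\,\ol{\mu(x,t,\bar k)}\,\sig$ for a fixed Pauli matrix $\sig$ that I will pin down in the sign bookkeeping. Everything is then driven by two discrete symmetries of the coefficient matrices $V_1,V_2$, one under $k\mapsto-k$ and one under $k\mapsto\bar k$ with complex conjugation.

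First I would treat the even/odd symmetry. Inspection of (\ref{V1}) and (\ref{V2}) shows that every diagonal entry of $V_1,V_2$ is even in $k$ and every off-diagonal entry is odd in $k$; equivalently $V_\ell(x,t,-k)=\sig_3 V_\ell(x,t,k)\sig_3$ for $\ell=1,2$. Setting $\nu(x,t,k):=\sig_3\,\mu(x,t,-k)\,\sig_3$ and using $\sig_3^2=\id$, $(-k)^2=k^2$, $(-k)^4=k^4$, together with $\sig_3[\sig_3,M]\sig_3=[\sig_3,\sig_3 M\sig_3]$ and $\sig_3 V_\ell(-k)\sig_3=V_\ell(k)$, one checks that $\nu$ satisfies the very same pair (\ref{Lax:DNLS6}). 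Since conjugation by $\sig_3$ and the reflection $k\mapsto-k$ both fix the base point and send $\id\mapsto\id$, $\nu$ obeys the same normalization in (\ref{muj}), so uniqueness gives $\nu=\mu$, which is exactly (\ref{evenodd}).

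Next I would treat the conjugation symmetry, where the hypotheses $r=-\ol q$ and the reality of the one-form $\Dta$ enter decisively. Because $\Dta$ is real the exponentials in (\ref{V1})--(\ref{V2}) obey $\ol{e^{\pm 2i\int\Dta}}=e^{\mp 2i\int\Dta}$, and because $r=-\ol q$ one has $rq=-|q|^2\in\R$, $\ol q=-r$, $\ol r=-q$. Feeding these through (\ref{V1}) and (\ref{V2}) I would verify the coefficient identity $\sig\,\ol{V_\ell(x,t,\bar k)}\,\sig=V_\ell(x,t,k)$. Then $\tilde\nu(x,t,k):=\sig\,\ol{\mu(x,t,\bar k)}\,\sig$ solves (\ref{Lax:DNLS6}): conjugating the equation at $\bar k$ and taking the complex conjugate flips the sign of $i$, which compensates $\sig_2\sig_3\sig_2=-\sig_3$ in the commutator term, so the $ik^2[\sig_3,\cdot]$ piece is reproduced exactly. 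As $k\mapsto\bar k$ fixes the real base points and $\sig\,\ol{\id}\,\sig=\id$, uniqueness again yields $\mu=\tilde\nu$, i.e.\ (\ref{sym}).

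The main obstacle is purely the sign bookkeeping in the conjugation step, and it is genuinely delicate. The diagonal part is immediate once one observes $\ol{V_{\ell,11}(\bar k)}=-V_{\ell,11}(k)$, but the off-diagonal entries decide everything: combining $\ol q=-r$ and $\ol r=-q$ with the interchange effected by $\sig$ shows that the coefficient identity $\sig\,\ol{V_\ell(\bar k)}\,\sig=V_\ell(k)$ holds with $\sig=\sig_2$ already at the level of $V_1$, and this forces the second relation to read $\mu_{21}(x,t,k)=-\ol{\mu_{12}(x,t,\bar k)}$; reconciling the sign produced by the computation with the one displayed in (\ref{sym}) is the single point that demands care (it is also where the $k$-linear terms $ik r_x$ and $ik q_x$ in $V_2$ must be tracked consistently with the convention $r=-\ol q$). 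A secondary, routine point is the analytic bookkeeping: since $\im(\bar k)^2=-\im k^2$ while $k\mapsto-k$ preserves $k^2$, the two symmetries respectively interchange and preserve the half-planes $\im k^2\gtrless 0$ of Proposition \ref{pro1}, and the matrix identities, first obtained from (\ref{muj}) where the integral converges, then propagate to the full domains of analyticity by uniqueness of analytic continuation.
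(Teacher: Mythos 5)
The paper offers no proof of this proposition at all---it is stated bare and followed immediately by (\ref{Skdefforab})---so there is nothing to compare line by line; your symmetry-reduction scheme (recast each family of relations as a matrix identity, verify the corresponding discrete symmetry of the coefficients $V_1,V_2$ in (\ref{Lax:DNLS6}), and invoke uniqueness for the Volterra equation (\ref{muj})) is the standard argument and is the right way to supply the missing proof. Your treatment of (\ref{evenodd}) via $\mu(x,t,-k)=\sigma_3\,\mu(x,t,k)\,\sigma_3$ is complete and matches the statement exactly, since the phases $k^2x+2k^4t$ are even in $k$ and conjugation by $\sigma_3$ commutes with the $e^{\theta\hat\sigma_3}$ action.

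Your sign analysis in the conjugation step is also correct, and it is worth recording what it uncovers. With the paper's convention $r=-\ol q$, the symmetry matrix is forced to be $\sigma_2=\left(\begin{smallmatrix}0&-i\\i&0\end{smallmatrix}\right)$: one checks $\sigma_2\,\ol{Q}\,\sigma_2=Q$ precisely because the off-diagonal entries of $Q$ are $q$ and $-\ol q$, while $\sigma_2\sigma_3\sigma_2=-\sigma_3$ absorbs the sign flip of $i$ coming from complex conjugation. The resulting identity $\mu(x,t,k)=\sigma_2\,\ol{\mu(x,t,\bar k)}\,\sigma_2$ gives $\mu_{11}(x,t,k)=\ol{\mu_{22}(x,t,\bar k)}$ but $\mu_{21}(x,t,k)=-\ol{\mu_{12}(x,t,\bar k)}$, so the second relation of (\ref{sym}) as printed is off by a sign. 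This is a typo in the paper, not a defect of your proof: your signed version is the one consistent with the paper's own formula (\ref{Skdefforab}), whose $(1,2)$ entry is $-\ol{b(\bar k)}$ (the unsigned relation would put $+\ol{b(\bar k)}$ there), and with $\det J=1$ in (\ref{jumpdef}); the unsigned relations printed in (\ref{sym}) correspond to the opposite convention $r=+\ol q$, for which the symmetry matrix is $\sigma_1$. One further caution that confirms your remark about tracking the $k$-linear terms: the $(2,1)$ entry of (\ref{V2}) as printed, $(2k^3r+ikr_x+kr^2q)e^{2i\int\Delta}$, is itself inconsistent with the matrix Lax pair (\ref{Lax:DNLS3}), whose expansion gives $2k^3r-ikr_x+kr^2q$ there (only the latter makes the zero-curvature condition reproduce (\ref{DNLS})); with the printed $+ikr_x$ no constant conjugation matrix works at all, so the coefficient identity $\sigma_2\,\ol{V_\ell(\bar k)}\,\sigma_2=V_\ell(k)$ must be verified against (\ref{Lax:DNLS3}) rather than against the displayed (\ref{V2}).
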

Thus, we have
\be \label{Skdefforab}
S(k)=\left(\ba{cc}a(k)&-\ol{b(\bar k)}\\b(k)&\ol{a(\bar k)}\ea \right),
\ee

\section{The basic Riemann-Hilbert problem}

\subsection{The original Riemann-Hilbert problem}
The scattering relation (\ref{psiphirel}) involving the eigenfunctions $\Psi(x,t,k)=\mu_1(x,t,k)$ and $\Phi(x,t,k)=\mu_2(x,t,k)$ can be rewritten in the form of conjugation of boundary values of a piecewise analytic matrix-value function on a contour in the complex $k-$plane,namely:
\be \label{MMRHP}
M_+(x,t,k)=M_-(x,t,k)J(x,t,k),\qquad k^2\in \R,
\ee
where $M_\pm(x,t,k)$ denote the boundary vales of $M(x,t,k)$ according to a chosen orientation of $\Sig$, and $\Sig=\R \cup i\R$
\begin{figure}[th]
\centering
\includegraphics{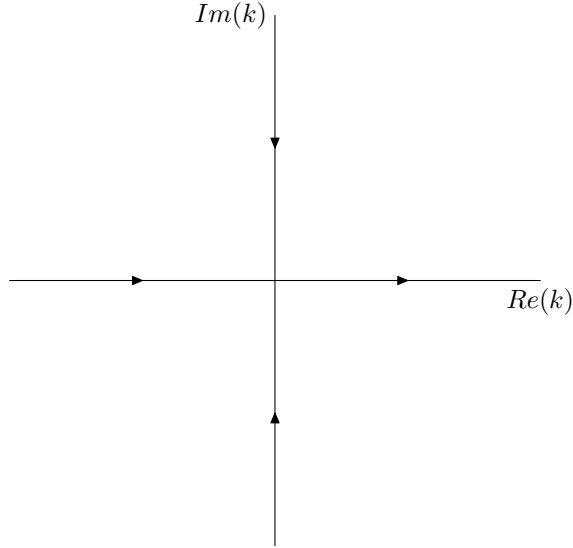}
\caption{The jump contour for $M$.}
\end{figure}
\par
Indeed,let us write (\ref{psiphirel}) in the vector form:
\be \label{rewritepsiphirel}
\begin{split}
&\frac{\Phi^{(1)}(x,t,k)}{a(k)}=\Psi^{(1)}(x,t,k)+r(k)\Psi^{(2)}(x,t,k),\\
&\frac{\Phi^{(2)}(x,t,k)}{\ol{a(\bar k)}}=-\ol{r(\bar k)}\Psi^{(1)}(x,t,k)+\Psi^{(2)}(x,t,k),
\end{split}
\ee
where
\be \label{rk}
r(k):=\frac{b(k)}{a(k)}
\ee
and define the matrix $M(x,t,k)$ as follows:
\be \label{Mdef}
M(x,t,k)=\left\{\ba{cc}(\ba{cc}\frac{\Phi^{(1)}(x,t,k)}{a(k)}e^{it\tha(k)}&\Psi^{(2)}(x,t,k)e^{-it\tha(k)}\ea),&k\in \{k\in \C|\im k^2>0\},\\(\ba{cc}\Psi^{(1)}(x,t,k)e^{it\tha(k)}&\frac{\Phi^{(2)}(x,t,k)}{\ol{a(\bar k)}}e^{-it\tha(k)}\ea),&k\in \{k\in \C|\im k^2<0\},\ea \right.
\ee
where
\be \label{thakdef}
\tha(k):=2k^4+\frac{x}{t}k^2,
\ee
Then the boundary values $M_+(x,t,k)$ and $M_-(x,t,k)$ relative to $\Sig$ are related by (\ref{MMRHP}),where
\be \label{jumpdef}
J(x,t,k)= \left(\ba{cc}1-r(k)\ol{r(\bar k)}&-\ol{r(\bar k)}e^{-2it\tha(k)}\\r(k)e^{2it\tha(k)}&1\ea \right),\qquad k^2\in \R,
\ee
\par
The jump relation (\ref{MMRHP}) considered together with the properties of the eigenfunctions listed in Proposition \ref{pro1} suggests a way of representing the solution to the Cauchy problem (\ref{DNLSandIv}) in terms of the solution of the Riemann-Hilbert problem, which is specified by the initial conditions (\ref{DNLSIv}) via the associated spectral function $r(k)$.
\par
The function $\tilde q(x,t)$ can be expressed in terms of the solution of the basic Riemann-Hilbert problem as follows:
\be \label{eqsol}
\tilde q(x,t)=2i\lim_{k \rightarrow \infty}(kM(x,t,k))_{12}.
\ee
where $M$ is the solution of the following Riemann-Hilbert problem:
\mbox{\bf The original Riemann-Hilbert problem:}
Given $r(k),k^2\in \R$, and $\Sig=\R \cup i\R$, find a $2\times 2$ matrix-value function $M(x,t,k)$ such that
\begin{enumerate}
\item $M(x,t,k)$ is analytic in $k\in \C\backslash \Sig$.
\item The boundary value $M_\pm(x,t,k)$ at $\Sig$ satisfy the jump condition
       \[
       M_+(x,t,k)=M_-(x,t,k)J(x,t,k),\quad k\in \Sig,
       \]
      where the jump matrix $J(x,t,k)$ is defined in terms of $r(k)$ by (\ref{jumpdef}).
\item Behavior at $\infty$
      \[
      M(x,t,k)=\id+O(\frac{1}{k}),\qquad \mbox{as }k\rightarrow \infty.
      \]
\end{enumerate}
And from the definition of the function $\tilde q(x,t)$ in (\ref{tldqdef}) we find
\be \label{qreltldq}
|q|=|\tilde q|
\ee
this means that the solution the the initial value problem (\ref{DNLSandIv}) can be expressed as follows:
\be\label{eqsolreal}
q(x,t)=\tilde q(x,t)e^{-2i\int_{+\infty}^{x}|\tilde q(y,t)|^2dy}.
\ee

\subsection{The new Riemann-Hilbert problem}\label{newRHPsec}

The jump condition (\ref{jumpdef}) is obtained in \cite{avkahv}. In that paper, the authors used this condition (\ref{jumpdef}) to analysis the long-time asymptotic behavior. But if we try to analysis the long-time asymptotic behavior of the DNLS equation (\ref{DNLS}) with step-like initial value problem, this type of Riemann-Hilbert problem has a contradiction in the plane wave region. So we try to derive a new Riemann-Hilbert problem, which is similar to the type of nonlinear Schr\"odinger equation, to overcome this contradiction. In this paper, we just analysis the long-time asymptotic behavior of the DNLS equation with decaying initial value problem. The step-like initial value problem will be obtained in another paper.
\par
We define
\be \label{Ndef}
\tilde N(x,t,k)=k^{-\frac{\hat \sig_3}{2}}M(x,t,k),
\ee
then the jump condition for $N$ is
\be \label{Njump}
\tilde N_+(x,t,k)=\tilde N_-(x,t,k)e^{-i(k^2x+2k^4t)\hat \sig_3}\left(\ba{cc}1-r(k)\ol{r(\bar k)}& -\frac{\ol{r(\bar k)}}{k}\\ k r(k)&1 \ea \right).
\ee
introducing $\lam=k^2$ and control the branch of $k$ as Sign $\im k=$Sign $\im \lam$, and define the modified scattering data $\rho(\lam)=\frac{r(k)}{k}$, \cite{kn}. And defining $N=B\tilde N$, where $B=\left(\ba{cc}1&0\\b&1\ea\right)$ and $b=\frac{\ol{\tilde q}}{2i}$,
\begin{figure}[th]
\centering
\includegraphics{IST1.2}
\caption{The jump contour for $N$.}
\end{figure}
\par
Then the jump condition for $N$ is
\be \label{NJdef}
N_+(x,t,\lam)=
N_-(x,t,\lam)e^{-i(\lam x+2\lam^2t)\hat \sig_3}J_N(x,t,\lam).
\ee
where
\[
J_N(x,t,\lam)=\left(\ba{cc}1-\lam |\rho(\lam)|^2&- \bar \rho(\lam)\\ \lam \rho(\lam)&1\ea \right)
\]
the matrix $J_N$ admits the following triangular factorizations:
\be \label{JNfac}
\ba{c}
J_N=\left(\ba{cc}1&-\bar \rho(\lam)\\0&1\ea \right)\left(\ba{cc}1&0\\\lam \rho&1\ea \right)\\
=\left(\ba{cc}1&0\\\frac{\lam \rho}{1-\lam|\rho|^2}&1\ea \right)\left(\ba{cc}1-\lam|\rho|^2&0\\0&\frac{1}{1-\lam|\rho|^2}\ea\right)\left(\ba{cc}1&\frac{-\bar \rho}{1-\lam|\rho|^2}\\0&1\ea\right)
\ea
\ee
and the solution of the DNLS equation (\ref{DNLSandIv}) is
\begin{subequations}
\be\label{tldsol}
\tilde q(x,t)=2i\lim_{\lam \rightarrow \infty}(\lam N(x,t,\lam))_{12},
\ee
\be \label{solq}
q(x,t)=\tilde q(x,t)e^{-2i\int_{+\infty}^{x}|\tilde q(y,t)|^2dy}.
\ee
\end{subequations}

\section{Long-time analysis}

In order to analysis the long-time behavior of the solution of the DNLS equation, first we should split the jump matrix into an appropriate upper/lower triangular form, then this can help us localize the problem to the neighborhood of the stationary point. An appropriate rescaling then reduces the problem to a Riemann-Hilbert problem with constant coefficients, which can be solved explicitly in terms of classical functions.

\subsection{The first transformation} \label{first}

In this subsection we extend the Riemann-Hilbert problem (\ref{NJdef}) to an augmented contour of the type $\Sig_1$ given in Figure 4, which is constructed to reflect the signature of $\re it\tha$. For technical reasons we will assume that the contour $\Sig_1$ is composed of straight lines, as shown in Figure 4 below with angle $\frac{\pi}{4}$, although any contour of the same general shape as $\Sig_1$ would do.
\begin{figure}[th]
\centering
\includegraphics{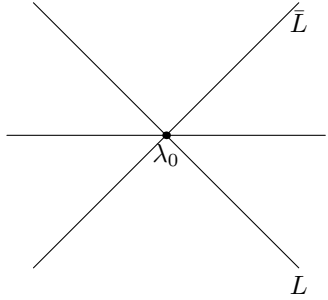}
\caption{The contour of $\Sig_1$.}
\end{figure}
\par
As in the \cite{dz}, we first consider the stationary point of the function $\tha(\lam)=2\lam^2+\frac{x}{t}\lam$, that is, letting
\[
\frac{d(\tha(\lam))}{d\lam}=0
\]
we get the stationary point $\lam_0=\frac{-x}{4t}$.
\par
And we also get the signature table of $\re \tha(\lam)$ that is as follows in Fig.5.
\begin{figure}[th]
\centering
\includegraphics{IST2.1}
\caption{The signature table of $\re (i\tha(\lam))$.}
\end{figure}
\par
Then we introduce a scalar function. Let $\dta(\lam)$ be the solution of the scalar factorization problem
\be \label{dtaRHP}
\left\{
\ba{ll}
\dta_+(\lam)=\dta_-(\lam)(1-\lam |\rho(\lam)|^2),&\lam<\lam_0,\\
\dta_+(\lam)=\dta_-(\lam),&\lam>\lam_0,\\
\dta(\lam)\rightarrow 1,&\lam \rightarrow \infty.
\ea
\right.
\ee
Direct calculation shows that (\ref{dtaRHP}) is solved by the formula
\be \label{dtasol}
\dta(\lam)=\exp{\frac{1}{2\pi i}}\int_{-\infty}^{\lam_0}\frac{\log{(1-\lam' |\rho(\lam')|^2)}}{\lam'-\lam}d\lam',
\ee
\par
And we can find that $\dta(\lam)$ and $\dta(\lam)^{-1}$ are uniformly bounded in $\lam$ and for $|\lam_0|\le M$.
\par
We conjugate the Riemann-Hilbert problem (\ref{NJdef}) by
\be \label{condta}
\dta^{\sig_3}(\lam)=\left(\ba{cc}\dta(\lam)&0\\0&\dta^{-1}(\lam)\ea \right),
\ee
leads to the factorization problem $N^{(1)}(x,t,\lam)=N(x,t,\lam)\dta^{-\sig_3}(\lam)$,
\be \label{N1rhp}
\left\{
\ba{cl}
N^{(1)}_{+}(x,t,\lam)=N^{(1)}_{-}(x,t,\lam)J_{N^{(1)}}(x,t,\lam),&\lam \in \R,\\
N^{(1)}(x,t,\lam)\rightarrow \id,&\lam \rightarrow \infty.
\ea
\right.
\ee
where
\be \label{N1jump}
J_{N^{(1)}}(x,t,\lam)=\dta_-^{\sig_3}J_{N^{(1)}}(x,t,\lam)\dta_+^{-\sig_3}(\lam)=
\left\{\ba{ll}e^{-it\dta \hat \sig_3}\left(\ba{cc}1&0\\\frac{\lam \rho\dta_-^{-2}}{1-\lam|\rho|^2}&1\ea \right)\left(\ba{cc}1&\frac{-\bar \rho\dta_+^{2}}{1-\lam|\rho|^2}\\0&1\ea\right),&\lam<\lam_0\\
e^{-it\dta \hat \sig_3}\left(\ba{cc}1&-\bar \rho\dta^2\\0&1\ea \right)\left(\ba{cc}1&0\\\lam \rho\dta^{-2}&1\ea \right),&\lam>\lam_0
\ea
\right.
\ee
\par
Having made the above definitions, we now describe the strategy. Suppose that the coefficients
\be \label{JN1coff}
\frac{\lam \rho}{1-\lam|\rho|^2},\quad \frac{\bar \rho}{1-\lam|\rho|^2},\quad \bar \rho,\quad \lam \rho
\ee
can be replaced by some rational functions
\be \label{JN1coffrat}
[\frac{\lam \rho}{1-\lam|\rho|^2}],\quad [\frac{\bar \rho}{1-\lam|\rho|^2}],\quad [\bar \rho],\quad [\lam \rho]
\ee
respectively. Then if the poles of these functions are appropriately placed, the Riemann-Hilbert problem on $\R$
can be deformed to the contour $\Sig_1$.
\begin{remark}
In this paper, we assume that the function $\rho(\lam)$ has no zero.
\end{remark}
\par
To verify that the coefficients (\ref{JN1coff}) can be replaced by the rational functions (\ref{JN1coffrat}) with well-controlled errors, we proceed as follows.
\par
\begin{description} \label{case}
\item[1] For $\lam<\lam_0$, $\frac{-\bar \rho}{1-\lam|\rho|^2}$ \label{case1}
\par
Set
\be \label{f1def}
f_1(\lam)=\frac{-\bar \rho}{1-\lam|\rho|^2}
\ee
By Taylor's formula, we have
\be \label{f1taylor}
(\lam-i)^{m+5}f(\lam)=\mu_0+\mu_1(\lam-\lam_0)+\cdots +\mu_m(\lam-\lam_0)^m+\frac{1}{m!}\int_{\lam_0}^{\lam}((\cdot -i)^{m+5}f(\cdot))^{(m+1)}(\gam)(\lam-\gam)^md\gam,
\ee
and define
\be \label{f1R}
R(\lam)=\frac{\sum_{i=0}^{m}\mu_i(\lam-\lam_0)^i}{(\lam-i)^{m+5}},
\ee
\be \label{f1h}
h(\lam)=f(\lam)-R(\lam),
\ee
As before, the proof of the following result is straightforward:
\begin{lemma}
\be \label{dfreldR}
\frac{d^jf(\lam)}{d\lam^j}|_{\lam_0}=\frac{d^jR(\lam)}{d\lam^j}|_{\lam_0},\quad 0\le j\le m.
\ee
Also, $\mu_i=\mu_i(\lam_0)$ decays rapidly as $\lam_0\rightarrow \infty$.
\end{lemma}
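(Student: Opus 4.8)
The plan is to recognize the coefficients $\mu_i$ produced by the Taylor formula \eqref{f1taylor} as the Taylor coefficients at the stationary point $\lam_0$ of the auxiliary function $g(\lam):=(\lam-i)^{m+5}f(\lam)$, so that $\mu_i=\frac{1}{i!}g^{(i)}(\lam_0)$, and to read $R$ in \eqref{f1R} as $R(\lam)=P(\lam)/(\lam-i)^{m+5}$, where $P(\lam)=\sum_{i=0}^m\mu_i(\lam-\lam_0)^i$ is exactly the degree-$m$ Taylor polynomial of $g$ at $\lam_0$. The identity \eqref{dfreldR} is then the statement that dividing two functions which agree to high order by a common smooth, nonvanishing factor preserves that agreement.

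For the first claim I would argue as follows. By construction $g-P$ is the integral remainder in \eqref{f1taylor}, hence $g(\lam)-P(\lam)=O((\lam-\lam_0)^{m+1})$ as $\lam\to\lam_0$. Since $\lam_0$ is real and $i\notin\R$, the factor $(\lam-i)^{m+5}$ is $C^\infty$ and nonvanishing in a neighbourhood of $\lam_0$, so $(\lam-i)^{-(m+5)}$ is smooth there; therefore
\be
f(\lam)-R(\lam)=\frac{g(\lam)-P(\lam)}{(\lam-i)^{m+5}}=O\big((\lam-\lam_0)^{m+1}\big),
\ee
which is precisely the assertion that $(f-R)^{(j)}(\lam_0)=0$ for $0\le j\le m$, i.e. \eqref{dfreldR}. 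If one prefers an exact identity rather than an order estimate, the same conclusion follows by induction on $j$ using the Leibniz rule applied to $g=(\lam-i)^{m+5}f$ and $P=(\lam-i)^{m+5}R$: at each step the inductive hypothesis kills every term except the one carrying the nonvanishing factor $(\lam_0-i)^{m+5}$, forcing $f^{(j)}(\lam_0)=R^{(j)}(\lam_0)$.

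For the decay claim the essential input is that, for decaying (Schwartz-class) initial data $q_0$, the reflection coefficient $r(k)=b(k)/a(k)$, and hence the modified datum $\rho(\lam)=r(k)/k$, is a Schwartz function of $\lam$, so that $\rho$ together with all its derivatives decays faster than any power as $\lam\to\infty$. Granting this, I would note that for large $\lam$ the denominator $1-\lam|\rho(\lam)|^2\to1$ and is bounded away from zero, whence $f=-\bar\rho/(1-\lam|\rho|^2)$ is itself Schwartz; multiplying by the polynomial $(\lam-i)^{m+5}$ slows the decay only by a fixed power and therefore leaves $g=(\lam-i)^{m+5}f$ and all its derivatives rapidly decaying (each $g^{(i)}$ being a finite sum of polynomials times derivatives of $f$, and a polynomial times a rapidly decaying function is again rapidly decaying). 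Consequently $\mu_i(\lam_0)=\frac{1}{i!}g^{(i)}(\lam_0)$ decays rapidly as $\lam_0\to\infty$, which is the second assertion.

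The two differentiations and the product-rule bookkeeping are routine; the genuine content is the Schwartz-class property of $\rho$, which I expect to be the main obstacle in that it is where the decaying-data hypothesis actually enters. It rests on the classical smoothness and rapid decay of the scattering data associated with Schwartz initial values, together with the standing assumptions (the preceding Remark and positivity of $1-\lam|\rho|^2$) that guarantee $f$ is genuinely $C^\infty$ near $\lam_0$, so that the expansion \eqref{f1taylor} is legitimate in the first place; for the decay statement alone this smoothness is automatic once $\lam_0$ is large.
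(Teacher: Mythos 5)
Your proof is correct and takes essentially the same route as the paper, which simply declares \eqref{dfreldR} ``immediate'' and derives the decay from the Taylor-coefficient formula $\mu_i=\frac{1}{i!}\frac{d^i}{d\lam^i}\bigl[(\cdot-i)^{m+5}f(\cdot)\bigr]\big|_{\lam_0}$ together with the rapid decay of the scattering data; you have merely filled in the details (the nonvanishing smooth factor $(\lam-i)^{m+5}$ for the first claim, the Schwartz property of $\rho$ for the second) that the paper leaves implicit.
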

\begin{proof}
Formula (\ref{dfreldR}) is immediate. The decay as $\lam_0\rightarrow \infty$ follows from the formulae
\[
\mu_i=\frac{1}{i!}\frac{d^if(u)}{du^i}|_{\lam_0}.
\]
\end{proof}
\par
In what follows we fix $m\in \Z_+$ and, for convenience, we assume that $m$ is of the form
\be \label{kform}
k=4q+1,\quad q\in \Z_+,
\ee
Write
\be \label{f1hr}
f(\lam)=h(\lam)+R(\lam),\quad \lam<\lam_0,
\ee
Then by the above lemma,
\be \label{dh}
\frac{d^jh(\lam)}{d\lam^j}|_{\lam_0}=0,\quad 0\le j\le m,
\ee
We now use this property to split $h$ further as
\be \label{hsplit}
h(\lam)=h_1(\lam)+h_2(\lam),
\ee
where $h_1(\lam)$ is small and $h_2(\lam)$ has an analytic continuation to $\lam+i0$. Thus
\be
f=h_1+(h_2+R)
\ee
is the desired splitting of $f$.
\par
Set
\be \label{f1bta}
\beta(\lam)=\frac{(\lam-\lam_0)^{q}}{(\lam-i)^{q+2}}.
\ee
Consider the Fourier transform with respect to $\tha$. As $\lam \rightarrow \tha(\lam)$ is one-to-one in $\lam<\lam_0$,we define
\be
\left\{
\ba{lll}
\frac{h}{\beta}(\tha)&=\frac{h(\lam(\tha))}{\beta(\lam(\tha))},&\tha>\tha(\lam_0)=-2\lam_0^2,\\
&=0,&\tha\le-2\lam_0^2
\ea
\right.
\ee
Thus, as $\tha>-2\lam_0^2$, from formulae (\ref{f1taylor}), (\ref{f1h}) and (\ref{f1bta}) it follows that
\be \label{f1hdef}
\frac{h}{\beta}(\lam)=\frac{(\lam-\lam_0)^{m+1-q}}{(\lam-i)^{m+3-q}}g(\lam,\lam_0),
\ee
where
\be
g(\lam,\lam_0)=\frac{1}{m!}\int_0^1((\cdot -i)^{m+5}f(\cdot))^{(m+1)}(\lam_0+u(\lam-\lam_0))(1-u)^mdu.
\ee
from which we see that
\be
\left |\frac{d^jg(\lam,\lam_0)}{d\lam^j}\right |\le C,\quad \lam\le \lam_0.
\ee
Then, we obtain
\be \label{halup}
\ba{rl}
\int_{-\infty}^{\infty}\left |(\frac{d}{d\tha})^j(\frac{h}{\beta}(\lam(\tha)))\right |^2\bar d\tha=&\int_{-\infty}^{\lam_0}\left |(\frac{1}{4(\lam-\lam_0)}\frac{d}{d\lam})^j(\frac{h}{\beta}(\lam))\right |^24(\lam-\lam_0)\bar d\lam\\
\le&C\int_{-\infty}^{\lam_0}\left | \frac{(\lam-\lam_0)^{m+1-q-2j}}{(\lam-i)^{m+3-q}}\right |^2(\lam-\lam_0)\bar d\lam\le C\\
&\mbox{for }0\le j\le \frac{m+1-q}{2}=\frac{3q+2}{2}.
\ea
\ee
By Plancherel,
\be \label{hathalup}
\int_{-\infty}^{\infty}(1+s^2)^j|\widehat{(h/\beta)}(s)|^2ds\le C<\infty,\quad 0\le j\le \frac{3q+2}{2}.
\ee
where
\be
\widehat{(h/\beta)}(s)=\int_{-\infty}^{\lam_0}e^{-is\tha(\lam)}(h/\beta)(\lam)\bar d\tha(\lam),
\ee
And by Fourier,
\be
(h/\beta)(\lam)=\int_{-\infty}^{\infty}e^{is\tha(\lam)}\widehat{(h/\beta)}(s)\bar ds.
\ee
In the above formulae we use the convenient notation $\bar ds=\frac{ds}{\sqrt{2\pi}}$ and $\bar d(\tha(\lam))=\frac{d\tha(\lam)}{\sqrt{2\pi}}$.
\par
\begin{remark}
The constants in (\ref{halup}) and (\ref{hathalup}) should properly be denoted by $c_1$ and $c_2$. Here, and in what follows, we use $c$ and sometimes $C$ to denote a generic constant. This abuse of notation should not give rise to any confusion.
\end{remark}
We split
\be \label{h11split}
\ba{rl}
h(\lam)&=\beta(\lam)\int_t^{\infty}e^{is\tha(\lam)}\widehat{(h/\beta)}(s)\bar ds+\beta(\lam)\int_{-\infty}^te^{is\tha(\lam)}\widehat{(h/\beta)}(s)\bar ds\\
&=h_1(\lam)+h_2(\lam).
\ea
\ee
For $\lam\le\lam_0$ we find that
\be
|e^{-2it\tha(\lam)}h_1(\lam)|\le\frac{C}{|\lam-i|^2t^{p-\frac{1}{2}}},\quad \mbox{for any }p\le\frac{3q+2}{2}.
\ee
\par
On the other hand, $h_2(\lam)$ has an analytic continuation to the upper half-plane, where $\re{i\tha(\lam)}>0$, and for $\lam$ on the line $\lam_0+\lam_0ue^{i\frac{3\pi}{4}},u\ge 0$,
\be
|e^{-2it\tha(\lam)}h_2(\lam)|\le\frac{c\lam_0^qu^qe^{-t\re{i\tha(\lam)}}}{|\lam-i|^{q+2}}.
\ee
However, from expression of $\tha(\lam)$, that is
\be
\tha(\lam)=2(\lam-\lam_0)^2-2\lam_0^2,
\ee
we have
\be
\re{i\tha(\lam)}=2\lam_0^2u^2,
\ee
and hence
\be
\ba{rl}
|e^{-2it\tha(\lam)}h_2(\lam)|&\le\frac{c\lam_0^q[((t\lam_0^2)^{\frac{1}{2}}u)^qe^{-2t\lam_0^2u^2}]}{|\lam-i|^{q+2}(t\lam_0^2)^{\frac{q}{2}}}\\
&\le\frac{c}{|\lam-i|^{q+2}t^{\frac{q}{2}}}\le\frac{c}{|\lam-i|^2t^{\frac{q}{2}}}.
\ea
\ee
On the line $\lam_0+\lam_0ue^{i\frac{3\pi}{4}},u\ge \eps,\eps>0$ we have
\be
|e^{-2it\tha(\lam)}R(\lam)|\le Ce^{-4t\lam_0^2u^2}\le Ce^{-4t\lam_0^2\eps^2}
\ee

\par

\item[2] For case: $\lam<\lam_0$, $\frac{\lam \rho}{1-\lam|\rho|^2}$ \label{case2}
\par
Set
\be
f_2(\lam)=\frac{\lam \rho}{1-\lam|\rho|^2}.
\ee
Again, by Taylor's formula, we have
\be \label{f2taylor}
(\lam+i)^{m+5}f(\lam)=\mu_0+\mu_1(\lam-\lam_0)+\cdots +\mu_m(\lam-\lam_0)^m+\frac{1}{m!}\int_{\lam_0}^{\lam}((\cdot +i)^{m+5}f(\cdot))^{(m+1)}(\gam)(\lam-\gam)^md\gam,
\ee
and define
\be \label{f2R}
R(\lam)=\frac{\sum_{i=0}^{m}\mu_i(\lam-\lam_0)^i}{(\lam+i)^{m+5}},
\ee
\be \label{f2h}
h(\lam)=f(\lam)-R(\lam),
\ee
As before, the proof of the following result is straightforward:
\begin{lemma}
\be
\frac{d^jf(\lam)}{d\lam^j}|_{\lam_0}=\frac{d^jR(\lam)}{d\lam^j}|_{\lam_0},\quad 0\le j\le m.
\ee
Also, $\mu_i=\mu_i(\lam_0)$ decays rapidly as $\lam_0\rightarrow \infty$.
\end{lemma}
Clearly $\frac{d^jh(\lam)}{d\lam^j}|_{\lam_0}=0,0\le j\le m$.
\par
Set
\be \label{f2bta}
\beta(\lam)=\frac{(\lam-\lam_0)^{q}}{(\lam+i)^{q+2}}.
\ee
From formulae (\ref{f2taylor}), (\ref{f2h}) and (\ref{f2bta}) it follows that
\be \label{f2hdef}
\frac{h}{\beta}(\lam)=\frac{(\lam-\lam_0)^{m+1-q}}{(\lam+i)^{m+3-q}}g(\lam,\lam_0),
\ee
where
\be
g(\lam,\lam_0)=\frac{1}{m!}\int_0^1((\cdot +i)^{m+5}f(\cdot))^{(m+1)}(\lam_0+u(\lam-\lam_0))(1-u)^mdu.
\ee
from which we see that
\be
\left |\frac{d^jg(\lam,\lam_0)}{d\lam^j}\right |\le C,\quad \lam\le \lam_0.
\ee
Then, we obtain
\be
\ba{rl}
\int_{-\infty}^{\infty}\left |(\frac{d}{d\tha})^j(\frac{h}{\beta}(\lam(\tha)))\right |^2\bar d\tha=&\int_{-\infty}^{\lam_0}\left |(\frac{1}{4(\lam-\lam_0)}\frac{d}{d\lam})^j(\frac{h}{\beta}(\lam))\right |^24(\lam-\lam_0)\bar d\lam\\
\le&C\int_{-\infty}^{\lam_0}\left | \frac{(\lam-\lam_0)^{m+1-q-2j}}{(\lam+i)^{m+3-q}}\right |^2(\lam-\lam_0)\bar d\lam\le C\\
&\mbox{for }0\le j\le \frac{m+1-q}{2}=\frac{3q+2}{2}.
\ea
\ee
By Plancherel,
\be
\int_{-\infty}^{\infty}(1+s^2)^j|\widehat{(h/\beta)}(s)|^2ds\le C<\infty,\quad 0\le j\le \frac{3q+2}{2}.
\ee
where
\be
\widehat{(h/\beta)}(s)=\int_{-\infty}^{\lam_0}e^{is\tha(\lam)}(h/\beta)(\lam)\bar d\tha(\lam),
\ee
And by Fourier,
\be
(h/\beta)(\lam)=\int_{-\infty}^{\infty}e^{-is\tha(\lam)}\widehat{(h/\beta)}(s)\bar ds.
\ee
Again we split
\be
\ba{rl}
h(\lam)&=\beta(\lam)\int_t^{\infty}e^{-is\tha(\lam)}\widehat{(h/\beta)}(s)\bar ds+\beta(\lam)\int_{-\infty}^te^{-is\tha(\lam)}\widehat{(h/\beta)}(s)\bar ds\\
&=h_1(\lam)+h_2(\lam).
\ea
\ee
For $\lam\le\lam_0$ we find that
\be
|e^{2it\tha(\lam)}h_1(\lam)|\le\frac{C}{|\lam+i|^2t^{p-\frac{1}{2}}},\quad \mbox{for any }p\le\frac{3q+2}{2}.
\ee
\par
On the other hand, $h_2(\lam)$ has an analytic continuation to the lower half-plane, where $\re{i\tha(\lam)}<0$, and for $\lam$ on the line $\lam_0+\lam_0ue^{-i\frac{3\pi}{4}},u\ge 0$,
\be
|e^{2it\tha(\lam)}h_2(\lam)|\le\frac{c\lam_0^qu^qe^{t\re{i\tha(\lam)}}}{|\lam+i|^{q+2}}.
\ee
However, from expression of $\tha(\lam)$, that is
\be
\tha(\lam)=2(\lam-\lam_0)^2-2\lam_0^2,
\ee
we have
\be
\re{i\tha(\lam)}=-2\lam_0^2u^2,
\ee
and hence
\be
\ba{rl}
|e^{2it\tha(\lam)}h_2(\lam)|&\le\frac{c\lam_0^q[((t\lam_0^2)^{\frac{1}{2}}u)^qe^{-2t\lam_0^2u^2}]}{|\lam+i|^{q+2}(t\lam_0^2)^{\frac{q}{2}}}\\
&\le\frac{c}{|\lam+i|^{q+2}t^{\frac{q}{2}}}\le\frac{c}{|\lam+i|^2t^{\frac{q}{2}}}.
\ea
\ee
On the line $\lam_0+\lam_0ue^{-i\frac{3\pi}{4}},u\ge \eps,\eps>0$ we have
\be
|e^{-2it\tha(\lam)}R(\lam)|\le Ce^{-4t\lam_0^2u^2}\le Ce^{-4t\lam_0^2\eps^2}
\ee
\par
In fact this case is just the conjugate of the above case. And the two cases in the following is fimilar with these two cases, but we write them down here for the reader's convenience.

\par
\item[3] For case: $\lam>\lam_0$, $\ol{\rho(\lam)}$ \label{case3}
\par
Set
\be \label{f3}
f_3(\lam)=\ol{\rho(\lam)},\quad \lam\ge \lam_0,
\ee
Again, by Taylor's formula, we have
\be \label{f3taylor}
(\lam-i)^{m+5}f(\lam)=\mu_0+\mu_1(\lam-\lam_0)+\cdots +\mu_m(\lam-\lam_0)^m+\frac{1}{m!}\int_{\lam_0}^{\lam}((\cdot -i)^{m+5}f(\cdot))^{(m+1)}(\gam)(\lam-\gam)^md\gam,
\ee
and define
\be \label{f3R}
R(\lam)=\frac{\sum_{i=0}^{m}\mu_i(\lam-\lam_0)^i}{(\lam-i)^{m+5}},
\ee
\be \label{f3h}
h(\lam)=f(\lam)-R(\lam),
\ee
As before, the proof of the following result is straightforward:
\begin{lemma}
\be
\frac{d^jf(\lam)}{d\lam^j}|_{\lam_0}=\frac{d^jR(\lam)}{d\lam^j}|_{\lam_0},\quad 0\le j\le m.
\ee
Also, $\mu_i=\mu_i(\lam_0)$ decays rapidly as $\lam_0\rightarrow \infty$.
\end{lemma}
Clearly $\frac{d^jh(\lam)}{d\lam^j}|_{\lam_0}=0,0\le j\le m$.
\par
Set
\be \label{f3bta}
\beta(\lam)=\frac{(\lam-\lam_0)^{q}}{(\lam-i)^{q+2}}.
\ee
From formulae (\ref{f3taylor}), (\ref{f3h}) and (\ref{f3bta}) it follows that
\be \label{f3hdef}
\frac{h}{\beta}(\lam)=\frac{(\lam-\lam_0)^{m+1-q}}{(\lam-i)^{m+3-q}}g(\lam,\lam_0),
\ee
where
\be
g(\lam,\lam_0)=\frac{1}{m!}\int_0^1((\cdot -i)^{m+5}f(\cdot))^{(m+1)}(\lam_0+u(\lam-\lam_0))(1-u)^mdu.
\ee
from which we see that
\be
\left |\frac{d^jg(\lam,\lam_0)}{d\lam^j}\right |\le C,\quad \lam\ge \lam_0.
\ee
Then, we obtain
\be
\ba{rl}
\int_{-\infty}^{\infty}\left |(\frac{d}{d\tha})^j(\frac{h}{\beta}(\lam(\tha)))\right |^2\bar d\tha=&\int_{\lam_0}^{\infty}\left |(\frac{1}{4(\lam-\lam_0)}\frac{d}{d\lam})^j(\frac{h}{\beta}(\lam))\right |^24(\lam-\lam_0)\bar d\lam\\
\le&C\int_{\lam_0}^{\infty}\left | \frac{(\lam-\lam_0)^{m+1-q-2j}}{(\lam-i)^{m+3-q}}\right |^2(\lam-\lam_0)\bar d\lam\le C\\
&\mbox{for }0\le j\le \frac{m+1-q}{2}=\frac{3q+2}{2}.
\ea
\ee
By Plancherel,
\be
\int_{-\infty}^{\infty}(1+s^2)^j|\widehat{(h/\beta)}(s)|^2ds\le C<\infty,\quad 0\le j\le \frac{3q+2}{2}.
\ee
where
\be
\widehat{(h/\beta)}(s)=\int_{\lam_0}^{\infty}e^{-is\tha(\lam)}(h/\beta)(\lam)\bar d\tha(\lam),
\ee
And by Fourier,
\be
(h/\beta)(\lam)=\int_{-\infty}^{\infty}e^{is\tha(\lam)}\widehat{(h/\beta)}(s)\bar ds.
\ee
Again we split
\be
\ba{rl}
h(\lam)&=\beta(\lam)\int_t^{\infty}e^{is\tha(\lam)}\widehat{(h/\beta)}(s)\bar ds+\beta(\lam)\int_{-\infty}^te^{is\tha(\lam)}\widehat{(h/\beta)}(s)\bar ds\\
&=h_1(\lam)+h_2(\lam).
\ea
\ee
For $\lam\ge\lam_0$ we find that
\be
|e^{-2it\tha(\lam)}h_1(\lam)|\le\frac{C}{|\lam-i|^2t^{p-\frac{1}{2}}},\quad \mbox{for any }p\le\frac{3q+2}{2}.
\ee
\par
On the other hand, $h_2(\lam)$ has an analytic continuation to the lower half-plane, where $\re{i\tha(\lam)}>0$, and for $\lam$ on the line $\lam_0+\lam_0ue^{-i\frac{\pi}{4}},u\ge 0$,
\be
|e^{-2it\tha(\lam)}h_2(\lam)|\le\frac{c\lam_0^qu^qe^{-t\re{i\tha(\lam)}}}{|\lam-i|^{q+2}}.
\ee
However, from expression of $\tha(\lam)$, that is
\be
\tha(\lam)=2(\lam-\lam_0)^2-2\lam_0^2,
\ee
we have
\be
\re{i\tha(\lam)}=2\lam_0^2u^2,
\ee
and hence
\be
\ba{rl}
|e^{-2it\tha(\lam)}h_2(\lam)|&\le\frac{c\lam_0^q[((t\lam_0^2)^{\frac{1}{2}}u)^qe^{-2t\lam_0^2u^2}]}{|\lam-i|^{q+2}(t\lam_0^2)^{\frac{q}{2}}}\\
&\le\frac{c}{|\lam-i|^{q+2}t^{\frac{q}{2}}}\le\frac{c}{|\lam-i|^2t^{\frac{q}{2}}}.
\ea
\ee
On the line $\lam_0+\lam_0ue^{-i\frac{\pi}{4}},u\ge \eps,\eps>0$ we have
\be
|e^{-2it\tha(\lam)}R(\lam)|\le Ce^{-4t\lam_0^2u^2}\le Ce^{-4t\lam_0^2\eps^2}
\ee

\par
\item[4] For case: $\lam\ge\lam_0:$ $\lam\rho(\lam)$   \label{case4}
\par
Set
\be
f(\lam)=\lam\rho(\lam).
\ee
Again, by Taylor's formula, we have
\be \label{f4taylor}
(\lam+i)^{k+5}f(\lam)=\mu_0+\mu_1(\lam-\lam_0)+\cdots +\mu_m(\lam-\lam_0)^k+\frac{1}{m!}\int_{\lam_0}^{\lam}((\cdot +i)^{k+5}f(\cdot))^{(m+1)}(\gam)(\lam-\gam)^md\gam,
\ee
and define
\be \label{f4R}
R(\lam)=\frac{\sum_{i=0}^{m}\mu_i(\lam-\lam_0)^i}{(\lam+i)^{m+5}},
\ee
\be \label{f4h}
h(\lam)=f(\lam)-R(\lam),
\ee
As before, the proof of the following result is straightforward:
\begin{lemma}
\be
\frac{d^jf(\lam)}{d\lam^j}|_{\lam_0}=\frac{d^jR(\lam)}{d\lam^j}|_{\lam_0},\quad 0\le j\le m.
\ee
Also, $\mu_i=\mu_i(\lam_0)$ decays rapidly as $\lam_0\rightarrow \infty$.
\end{lemma}
Clearly $\frac{d^jh(\lam)}{d\lam^j}|_{\lam_0}=0,0\le j\le m$.
\par
Set
\be \label{f4bta}
\beta(\lam)=\frac{(\lam-\lam_0)^{q}}{(\lam+i)^{q+2}}.
\ee
From formulae (\ref{f4taylor}), (\ref{f4h}) and (\ref{f4bta}) it follows that
\be \label{f4hdef}
\frac{h}{\beta}(\lam)=\frac{(\lam-\lam_0)^{m+1-q}}{(\lam+i)^{m+3-q}}g(\lam,\lam_0),
\ee
where
\be
g(\lam,\lam_0)=\frac{1}{m!}\int_0^1((\cdot +i)^{m+5}f(\cdot))^{(m+1)}(\lam_0+u(\lam-\lam_0))(1-u)^mdu.
\ee
from which we see that
\be
\left |\frac{d^jg(\lam,\lam_0)}{d\lam^j}\right |\le C,\quad \lam\ge \lam_0.
\ee
Then, we obtain
\be
\ba{rl}
\int_{-\infty}^{\infty}\left |(\frac{d}{d\tha})^j(\frac{h}{\beta}(\lam(\tha)))\right |^2\bar d\tha=&\int_{\lam_0}^{\infty}\left |(\frac{1}{4(\lam-\lam_0)}\frac{d}{d\lam})^j(\frac{h}{\beta}(\lam))\right |^24(\lam-\lam_0)\bar d\lam\\
\le&C\int_{\lam_0}^{\infty}\left | \frac{(\lam-\lam_0)^{m+1-q-2j}}{(\lam+i)^{m+3-q}}\right |^2(\lam-\lam_0)\bar d\lam\le C\\
&\mbox{for }0\le j\le \frac{k+1-q}{2}=\frac{3q+2}{2}.
\ea
\ee
By Plancherel,
\be
\int_{-\infty}^{\infty}(1+s^2)^j|\widehat{(h/\beta)}(s)|^2ds\le C<\infty,\quad 0\le j\le \frac{3q+2}{2}.
\ee
where
\be
\widehat{(h/\beta)}(s)=\int_{\lam_0}^{\infty}e^{is\tha(\lam)}(h/\beta)(\lam)\bar d\tha(\lam),
\ee
And by Fourier,
\be
(h/\beta)(\lam)=\int_{-\infty}^{\infty}e^{-is\tha(\lam)}\widehat{(h/\beta)}(s)\bar ds.
\ee
Again we split
\be
\ba{rl}
h(\lam)&=\beta(\lam)\int_t^{\infty}e^{-is\tha(\lam)}\widehat{(h/\beta)}(s)\bar ds+\beta(\lam)\int_{-\infty}^te^{-is\tha(\lam)}\widehat{(h/\beta)}(s)\bar ds\\
&=h_1(\lam)+h_2(\lam).
\ea
\ee
For $\lam\ge\lam_0$ we find that
\be
|e^{2it\tha(\lam)}h_1(\lam)|\le\frac{C}{|\lam+i|^2t^{p-\frac{1}{2}}},\quad \mbox{for any }p\le\frac{k+1-q}{2}.
\ee
\par
On the other hand, $h_2(\lam)$ has an analytic continuation to the upper half-plane, where $\re{i\tha(\lam)}<0$, and for $\lam$ on the line $\lam_0+\lam_0ue^{i\frac{\pi}{4}},u\ge 0$,
\be
|e^{2it\tha(\lam)}h_2(\lam)|\le\frac{c\lam_0^qu^qe^{t\re{i\tha(\lam)}}}{|\lam+i|^{q+2}}.
\ee
However, from expression of $\tha(\lam)$, that is
\be
\tha(\lam)=2(\lam-\lam_0)^2-2\lam_0^2,
\ee
we have
\be
\re{i\tha(\lam)}=-2\lam_0^2u^2,
\ee
and hence
\be
\ba{rl}
|e^{2it\tha(\lam)}h_2(\lam)|&\le\frac{c\lam_0^q[((t\lam_0^2)^{\frac{1}{2}}u)^qe^{-2t\lam_0^2u^2}]}{|\lam+i|^{q+2}(t\lam_0^2)^{\frac{q}{2}}}\\
&\le\frac{c}{|\lam+i|^{q+2}t^{\frac{q}{2}}}\le\frac{c}{|\lam+i|^2t^{\frac{q}{2}}}.
\ea
\ee
On the line $\lam_0+\lam_0ue^{i\frac{\pi}{4}},u\ge \eps,\eps>0$ we have
\be
|e^{-2it\tha(\lam)}R(\lam)|\le Ce^{-4t\lam_0^2u^2}\le Ce^{-4t\lam_0^2\eps^2}
\ee
\end{description}

\par
We can summarize our results as follows: let $l$ be an arbitrary positive integer and let $k=4q+1$ be sufficiently large that the integers that are the last formula of the above formulas about $h_2$ are all greater than $l$. Let $L$ denote the contour
\be \label{Lcon}
L:\{\lam=\lam_0+\lam_0ue^{-i\frac{\pi}{4}};u\ge 0\}\cup \{\lam=\lam_0+\lam_0ue^{i\frac{3\pi}{4}};u\ge 0\}
\ee
so that the contour $\Sig_1$ in Figure 4 is given by
\be
\Sig_1=L\cup \bar L\cup \R.
\ee
Also set
\be
L_{\eps}=\{\lam=\lam_0+\lam_0ue^{i\frac{3\pi}{4}},u\ge \eps \}\cup\{\lam=\lam_0+\lam_0ue^{-i\frac{\pi}{4}},u\ge \eps \}
\ee
\begin{proposition} \label{summtec}
Let
\be
f(\lam)=
\left\{
\ba{lll}
f_1(\lam)=&\frac{-\bar \rho}{1-\lam|\rho|^2},& \lam<\lam_0\\
f_2(\lam)=&\frac{\lam \rho}{1-\lam|\rho|^2},& \lam<\lam_0\\
f_3(\lam)=&\ol{\rho(\lam)},&\lam>\lam_0\\
f_4(\lam)=&\lam\rho(\lam),& \lam>\lam_0
\ea
\right.
\ee
Then $f$ has a decomposition
\be
f(\lam)=h_1(\lam)+h_2(\lam)+R(\lam),\quad \lam \in \R,
\ee
where $R(\lam)$ is piecewise rational and $h_2(\lam)$ has an analytic continuation to $L$ or $\bar L$ satisfying
for case (\ref{case1}) and (\ref{case3})
\be
|e^{-2it\tha(\lam)}h_1(\lam)|\le \frac{c}{(1+|\lam|^2)t^l},\quad \lam \in \R,
\ee
\be
|e^{-2it\tha(\lam)}h_2(\lam)|\le \frac{c}{(1+|\lam|^2)t^l},\quad \lam \in L,
\ee
and
\be
|e^{-2it\tha(\lam)}R(\lam)|\le Ce^{-4t\lam_0^2u^2}\le Ce^{-4\eps^2\tau},\quad \lam \in L_{\eps},
\ee
or
for case (\ref{case2}) and (\ref{case4})
\be
|e^{2it\tha(\lam)}h_1(\lam)|\le \frac{c}{(1+|\lam|^2)t^l},\quad \lam \in \R,
\ee
\be
|e^{2it\tha(\lam)}h_2(\lam)|\le \frac{c}{(1+|\lam|^2)t^l},\quad \lam \in \bar L,
\ee
and
\be
|e^{2it\tha(\lam)}R(\lam)|\le Ce^{-4t\lam_0^2u^2}\le Ce^{-4\eps^2\tau},\quad \lam \in \bar L_{\eps},
\ee
where $l$ is an arbitary positive integer and $\tau=t\lam_0^2$.
\end{proposition}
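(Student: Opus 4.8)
The plan is to prove Proposition~\ref{summtec} by collecting the four decompositions constructed above in cases \textbf{1}--\textbf{4} and then carrying out the bookkeeping needed to replace the case-dependent decay exponents by the single rate $t^{-l}$. For each of $f_1,f_2,f_3,f_4$ the splitting $f=h_1+h_2+R$ is already in hand: $R$ is obtained by truncating the Taylor expansion of $(\lam\mp i)^{m+5}f$ at $\lam_0$ and dividing back by $(\lam\mp i)^{m+5}$, so it is piecewise rational; the remainder $h=f-R$ vanishes to order $m$ at $\lam_0$ by the lemma in each case; and the Fourier split $h=h_1+h_2$ across the frequency $s=t$ isolates the rapidly decaying part $h_1$ from the part $h_2$ admitting analytic continuation off the real axis. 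What remains is to (i) fix the parameter $q$, (ii) read off the three required bounds from the per-case estimates, and (iii) rewrite the weights.

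First I would fix the integer $q$, equivalently $k=4q+1$. Inspecting the final estimates in each case, $h_1$ decays like $t^{-(p-1/2)}$ with $p$ as large as $(3q+2)/2$, while $h_2$ decays like $t^{-q/2}$; both exponents exceed the prescribed $l$ as soon as $q\ge 2l$, which is precisely the ``$k$ sufficiently large'' hypothesis stated before the proposition. With $q$ so fixed, the bound on $h_1$ over $\R$ follows from Cauchy--Schwarz applied to the high-frequency integral $\beta(\lam)\int_t^{\infty}e^{\pm is\tha}\widehat{(h/\beta)}(s)\,\bar ds$ together with the Plancherel estimate \eqref{hathalup}; the bound on $h_2$ over $L$ (resp.\ $\bar L$) follows from the analytic-continuation estimate combined with the computation $\re(i\tha(\lam))=\pm 2\lam_0^2u^2$ along the defining rays, which produces the Gaussian factor $e^{-2t\lam_0^2u^2}$ and hence, after maximizing $u^q e^{-2t\lam_0^2u^2}$, the rate $t^{-q/2}$; and the estimate on $R$ over $L_\eps$ (resp.\ $\bar L_\eps$) is immediate from $e^{-4t\lam_0^2u^2}\le e^{-4\eps^2\tau}$ with $\tau=t\lam_0^2$.

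The final step is cosmetic. On $\R$ one has $|\lam\mp i|^2=1+\lam^2=1+|\lam|^2$ exactly, while on the rays of $L$ and $\bar L$ the quantities $|\lam\mp i|^2$ and $1+|\lam|^2$ are comparable up to a fixed constant; hence the per-case weights $|\lam\mp i|^{-2}$ may be replaced by $(1+|\lam|^2)^{-1}$ after enlarging the generic constant $c$. Grouping cases \textbf{1},\textbf{3} (which carry $e^{-2it\tha}$ and continuation onto $L$) and cases \textbf{2},\textbf{4} (which carry $e^{2it\tha}$ and continuation onto $\bar L$) then yields the two displayed families of inequalities.

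I expect the real difficulty to lie not in this assembly but in the uniformity of all constants with respect to the stationary point $\lam_0=-x/4t$. The Taylor-remainder bound $|d^jg/d\lam^j|\le C$, the Sobolev bound \eqref{hathalup}, and the comparability of the two weights must each hold with a constant independent of $\lam_0$ as $\lam_0$ ranges over the similarity region $|\lam_0|\le M$, since $\lam_0$ varies with the direction $x/t$. Securing this requires controlling $\rho$ and $(1-\lam|\rho|^2)^{-1}$ uniformly (using the no-zero hypothesis of the preceding remark together with sufficient smoothness and decay of $\rho$), and treating the change of variables $\lam\mapsto\tha(\lam)$ on the two half-lines $\lam<\lam_0$ and $\lam>\lam_0$ separately, with Jacobian $4(\lam-\lam_0)$ as already recorded in the per-case computations.
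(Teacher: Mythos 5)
Your proposal is correct and takes essentially the same route as the paper: the paper gives no separate proof of Proposition \ref{summtec} beyond the preceding four-case Taylor/Fourier analysis, and your assembly — fixing $k=4q+1$ with $q\ge 2l$ so that the exponents $p-\frac12\le\frac{3q+1}{2}$ and $\frac q2$ dominate $l$, Cauchy--Schwarz plus Plancherel for $h_1$, the Gaussian factor from $\re(i\tha)=\pm2\lam_0^2u^2$ for $h_2$, the immediate estimate for $R$ on $L_\eps$, and the regrouping of cases $1,3$ on $L$ versus $2,4$ on $\bar L$ — is precisely the summary the paper intends. If anything you are more explicit than the paper, which silently performs the weight conversion $|\lam\mp i|^{-2}\to(1+|\lam|^2)^{-1}$ and never discusses uniformity in $\lam_0$; both of these points are glossed in the original to exactly the same degree.
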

\par
Finally we extend the Riemann-Hilbert problem (\ref{Njump}) to the augmente contour $\Sig_1$ of Figure 4. From problem (\ref{N1rhp}) and formulae(\ref{N1jump}), the Riemann-Hilbert problem across $\R$ oriented as Figure 3 is given by
\be \label{N1RHP}
\begin{split}
N^{(1)}_+=N^{(1)}_-(b_-)_{x,t,\dta}^{-1}(b_+)_{x,t,\dta}&\\
N^{(1)}\rightarrow \id,\qquad \lam \rightarrow \infty.&
\end{split}
\ee
where
\be
(b_{\pm})_{x,t,\dta}=\dta_{\pm}^{\hat \sig_3}e^{-it\tha(\lam)\hat \sig_3}b_{\pm},
\ee
\be
b_+=\id+w_+=\left\{
\ba{ll}
\left(\ba{cc}1&f_{1}\\0&1\ea \right),&\lam<\lam_0\\
\left(\ba{cc}1&f_{3}\\0&1\ea \right),&\lam<\lam_0
\ea
\right.
\ee
\be
b_-=\id+w_-=\left\{
\ba{ll}
\left(\ba{cc}1&f_{2}\\0&1\ea \right),&\lam<\lam_0\\
\left(\ba{cc}1&f_{4}\\0&1\ea \right),&\lam<\lam_0
\ea
\right.
\ee
\par
Orient $\Sig_1$ in Figure 4 as in Figure 6 and write
\begin{subequations}\label{bpmdef}
\be \label{bpdef}
b_+=b_+^ob_+^a=(\id+w_+^o)(\id+w_+^a)=\left\{
\ba{ll}
\left(\ba{cc}1&(h_1)^{1}\\0&1\ea\right)\left(\ba{cc}1&(h_2+R)^{1}\\0&1\ea\right),&\lam<\lam_0\\
\left(\ba{cc}1&(h_1)^{3}\\0&1\ea\right)\left(\ba{cc}1&(h_2+R)^{3}\\0&1\ea\right),&\lam>\lam_0
\ea
\right.
\ee
\be\label{bmdef}
b_-=b_-^ob_-^a=(\id+w_-^o)(\id+w_-^a)=\left\{
\ba{ll}
\left(\ba{cc}1&(h_1)^{2}\\0&1\ea\right)\left(\ba{cc}1&(h_2+R)^{2}\\0&1\ea\right),&\lam<\lam_0\\
\left(\ba{cc}1&(h_1)^{4}\\0&1\ea\right)\left(\ba{cc}1&(h_2+R)^{4}\\0&1\ea\right),&\lam>\lam_0
\ea
\right.
\ee
\end{subequations}
where the function $(h)^{j}$ denotes the function is defined in the case $j$, $j=1,2,3,4$.
\begin{figure}[th]
\centering
\includegraphics{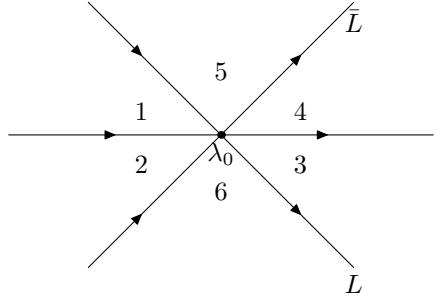}
\caption{The orient contour of $\Sig_1$.}
\end{figure}

Setting
\be
\left\{
\begin{split}
N^{(1')}(\lam)=&N^{(1)}(\lam),\qquad \qquad \lam \in 5\cup 6,\\
=&N^{(1)}(\lam)(b_-^a)_{x,t,\dta}^{-1},\quad \lam \in 2\cup 4,\\
=&N^{(1)}(\lam)(b_+^a)_{x,t,\dta}^{-1},\quad \lam \in 1\cup 3.
\end{split}
\right.
\ee
we find that a simple computation shows that (\ref{N1RHP}) is equivalent to the factorization problem
\be \label{N1'RHP}
\left\{
\ba{cl}
N^{(1')}_+(\lam)=N^{(1')}_-(\lam)(J_{N^{(1')}}(\lam))_{x,t,\dta},&\lam \in \Sig_1,\\
N^{(1')}(\lam)\rightarrow \id,&\lam \rightarrow \infty.
\ea
\right.
\ee
where
\be \label{N1'jump}
\left\{
\ba{rll}
(J_{N^{(1')}}(\lam))_{x,t,\dta}=&(b_-^o)_{x,t,\dta}^{-1}(b_+^o)_{x,t,\dta},&\lam \in \R,\\
=&(b_+^a)_{x,t,\dta},&\lam \in L,\\
=&(b_-^a)_{x,t,\dta}^{-1},&\lam \in \bar L.
\ea
\right.
\ee
Indeed to show that $(b_+^a)_{x,t,\dta}^{-1}$, for example, converges to $\id$ as $\lam \rightarrow \infty$ in $1$ we observe that, for fixed $x,t$, by formula (\ref{h11split}) and the bounded of the function $\dta(\lam)$,we have
\be
\begin{split}
|\dta^2e^{-2it\tha(\lam)}h_2(\lam)|\le & c|\beta(\lam)|e^{-t\re{i\tha(\lam)}}\left|\int_{-\infty}^{t}e^{i(s-t)\tha(\lam)}\widehat{(h/\beta)}(s)\bar ds\right|\\
\le &c\frac{|\lam-\lam_0|^q}{|\lam-i|^{q+2}}\int_{-\infty}^{t}|\widehat{(h/\beta)}(s)|\bar ds\le \frac{C}{|\lam-i|^2}
\end{split}
\ee
and
\be
|\dta^2e^{-2it\tha(\lam)}R(\lam)|\le \frac{C|\sum_{i=0}^{k}\mu_i(\lam-\lam_0)^i|}{|\lam-i|^{k+5}}\le \frac{C}{|\lam-i|^5},
\ee
which is converges to $0$ as $\lam\rightarrow \infty$, and so on.
\par
Set
\be
(w_{\pm}^{1'})_{x,t,\dta}=\pm ((b_{\pm}^{1'})_{x,t,\dta}-\id),
\ee
\be
(w^{1'})_{x,t,\dta}=(w_+^{1'})_{x,t,\dta}+(w_-^{1'})_{x,t,\dta},
\ee
Observe from Proposition \ref{summtec} that, for fixed $x,t$, we then have
\be \label{w1'space}
(w_{\pm}^{1'})_{x,t,\dta},(w^{1'})_{x,t,\dta},\in L^1(\Sig_1)\cap L^{\infty}(\Sig_1).
\ee

\subsection{The second step}
In this section we show how to convert the Riemann-Hilbert problem (\ref{N1'RHP}) on $\Sig_1$ to a Riemann-Hilbert problem on a truncated contour with controlled error terms.
\par
From the above section we have
\be
\begin{split}
\tilde q(x,t)=&2i\lim_{\lam \rightarrow \infty}(\lam N(x,t,\lam))_{12}\\
=&i\lim_{\lam \rightarrow \infty}\lam[\sig_3,N(x,t,\lam)]\\
=&i\lim_{\lam \rightarrow \infty}\lam[\sig_3,N^{(1)}(x,t,\lam)]
\end{split}
\ee
In particular we can take the limit as $\lam \rightarrow\infty$ in $5$, where $N^{(1)}(x,t,\lam)=N^{(1')}(x,t,\lam)$, so
\be \label{qendsolve}
\tilde q(x,t)=i\lim_{\lam \rightarrow \infty}\lam[\sig_3,N^{(1')}(x,t,\lam)].
\ee
\par
The Riemann-Hilbert problem (\ref{N1'RHP}) can be solved as follows (see, for example, \cite{bc}). Let
\be \label{bcsolve1}
(C_{\pm}f)(\lam)=\int_{\Sig_1}\frac{f(\zeta)}{\zeta-\lam_{\pm}}\frac{d\zeta}{2\pi i},\quad \lam\in \Sig_1,f\in L^2(\Sig),
\ee
denote the Cauchy operator on $\Sig_1$ oriented as in Figure 6. Thus, for example, for $\lam>\lam_0$ we have $(C_+f)(\lam)=\lim_{\eps\downarrow 0}\int_{\Sig_1}\frac{f(\zeta)}{\zeta-(\lam-i\eps)}\frac{d\zeta}{2\pi i}$, etc. As is well known, the operators $C_{\pm}$ are bounded from $L^2(\Sig_1)$ to $L^2(\Sig_1)$, and $C_+-C_-=1$. Also, by scaling, we know that the bounds on $C_{\pm}:L^2(\Sig_1)\rightarrow L^2(\Sig_1)$ are independent of $\lam_0$.
\par
Define
\be \label{bcsolve2}
C_{w_{x,t,\dta}^{1'}}f=C_+(f(w_-^{1'})_{x,t,\dta})+C_-(f(w_+^{1'})_{x,t,\dta})
\ee
for $2\times 2$ matrix-valued functions $f$. By property (\ref{w1'space}), $C_{w_{x,t,\dta}^{1'}}$ is a bounded map from $L^2(\Sig_1)+L^{\infty}(\Sig_1)$ into $L^2(\Sig_1)$. Let $\mu^{1'}=\mu^{1'}(\lam;x,t)\in L^2(\Sig_1)+L^{\infty}(\Sig_1)$ be the solution of the basic inverse equation
\be \label{bcbasicequ}
\mu^{1'}=\id+C_{w_{x,t,\dta}^{1'}}\mu^{1'}.
\ee
Then
\be \label{bcsolve3}
N^{(1')}(x,t,\lam)=\id+\int_{\Sig_1}\frac{\mu^{1'}(\zeta;x,t)w^{1'}_{x,t,\dta}(\zeta)}{\zeta-\lam}\frac{d\zeta}{2\pi i},\quad \lam \in \C \backslash \Sig_1,
\ee
is the unique solution of the Riemann-Hilbert problem (\ref{N1'RHP}). Indeed,
\be
\begin{split}
\mu_{\pm}^{1'}=&\id+C_{\pm}(\mu^{1'}w_{x,t,\dta}^{1'})\\
=&\id+C_{\pm}(\mu^{1'}(w^{1'}_+)_{x,t,\dta})+C_{\pm}(\mu^{1'}(w^{1'}_-)_{x,t,\dta})\\
=&\id+C_{w^{1'}_{x,t,\dta}}\mu^{1'}\pm \mu^{1'}(w^{1'}_{\pm})_{x,t,\dta}\\
=&\mu^{1'}(b^{1'}_{\pm})_{x,t,\dta}
\end{split}
\ee
by equation (\ref{bcbasicequ}) and formula (\ref{N1'jump}), which implies that
\[
N^{(1')}_+=N^{(1')}_-(b^{1'}_-)^{-1}_{x,t,\dta}(b^{1'}_+)_{x,t,\dta}=N^{(1')}_-(J_{N^{(1')}})_{x,t,\dta},
\]
as desired. Substituting formula (\ref{bcsolve3}) into (\ref{qendsolve}), we learn that
\be
\begin{split}
q(x,t)=&-(\int_{\Sig_1}[\sig_3,\mu^{1'}(\zeta;x,t)w^{1'}_{x,t,\dta}(\zeta)]\frac{d\zeta}{2\pi i})_{12}\\
=&-(\int_{\Sig_1}[\sig_3,((\id-C_{w^{1'}_{x,t,\dta}})^{-1})(\zeta)w^{1'}_{x,t,\dta}(\zeta)]\frac{d\zeta}{2\pi i})_{12}
\end{split}
\ee
\par
Let $w^e:\Sig_1\rightarrow M(2,\C)$ be a sum of three terms
\be \label{opsum}
w^e=w^a+w^b+w^c.
\ee
we then have the following:
\be
\left\{
\ba{l}
\ba{l}w^a=w^{1'}_{x,t,\dta}|\R \mbox{ is supported on $\R$ and is composed of terms of type}\\ \mbox{\quad $h_1$}.\ea\\
\ba{l}w^b \mbox{ is supported on $L\cup \bar L$ and is composed of the contribution to $w^{1'}_{x,t,\dta}$}\\ \mbox{\quad from terms of type $h_2$.}\ea \\
\ba{l}w^c \mbox{ is supported on $L_{\eps}\cup \bar L_{\eps}$ and is composed of the contribution to $w^{1'}_{x,t,\dta}$}\\ \mbox{\quad from terms of type $R$.}\ea
\ea
\right.
\ee
\begin{figure}[th]
\centering
\includegraphics{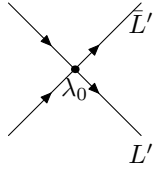}
\caption{The orient contour of cross $\Sig_1'$.}
\end{figure}
Set
\be
\Sig_1^{'}=\Sig_1\backslash (\R \cup L_{\eps}\cup \bar L_{\eps})
\ee
where
\[
L_{\eps}:\{\lam=\lam_0+\lam_0ue^{-i\frac{\pi}{4}};u\ge \eps\}\cup \{\lam=\lam_0+\lam_0ue^{i\frac{3\pi}{4}};u\ge \eps\}
\]

with the orientation as in Figure 7. Define $w'$ through
\be
w^{1'}_{x,t,\dta}=w'+w^e.
\ee
Observe that $w'=0$ on $\Sig_1\backslash \Sig_1^{'}$.
\par
The following estimates are immediate from the bounded of $\dta(\lam)$ and Proposition \ref{summtec}, in which the decay rate $l$ can be chosen to be arbitrarily large. Of course $L^2$ estimates follow immediately from $L^1$ and $L^{\infty}$ estimates. However, throughout this article and in particular in the lemma that follows, we write out the $L^2$ estimates explicitly for the reader's convenience.
\begin{lemma} \label{techlema}
For $\lam_0<M$,
\be
||w^a||_{L^{\infty}(\R)\cap L^2(\R)\cap L^1(\R)}\le Ct^{-l},
\ee
\be
||w^b||_{L^{\infty}(L\cup \bar L)\cap L^2(L\cup \bar L)\cap L^1(L\cup \bar L)}\le Ct^{-l},
\ee
\be
\ba{l}
||w^c||_{L^{\infty}(L_{\eps}\cup \bar L_{\eps})}\le Ce^{-\gam_{\eps}\tau},\\
||w^c||_{L^{2}(L_{\eps}\cup \bar L_{\eps})}\le C\lam_0^{\frac{1}{2}}e^{-\gam_{\eps}\tau},\\
||w^c||_{L^{1}(L_{\eps}\cup \bar L_{\eps})}\le C\lam_0 e^{-\gam_{\eps}\tau},
\ea
\ee
where $\gam_{\eps}=\min{(4\eps^2,\frac{2\eps^2}{3M})}$.
\par
Also,
\be
||w^{1'}_{x,t,\dta}||_{L^2(\Sig_1)},\quad ||w^{e}||_{L^2(\Sig_1)},\quad ||w^{'}||_{L^2(\Sig_1)}\le C.
\ee
\end{lemma}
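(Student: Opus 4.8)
The plan is to read off every one of these estimates from the pointwise decay bounds already recorded in Proposition \ref{summtec}, using only that $\dta(\lam)$ and $\dta(\lam)^{-1}$ are uniformly bounded (established just after (\ref{dtasol})) and that conjugation by $\dta^{\pm\hat\sig_3}$ together with the phase factors contributes only fixed multiplicative constants to the size of the relevant off-diagonal entries. Recall that $w^a$, $w^b$, $w^c$ are, respectively, the $h_1$-, $h_2$-, and $R$-contributions to $w^{1'}_{x,t,\dta}$, supported on $\R$, on $L\cup\bar L$, and on $L_\eps\cup\bar L_\eps$. Hence each required norm is simply the $L^p$ norm over the indicated contour of a scalar whose modulus is already controlled pointwise in Proposition \ref{summtec}.

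For $w^a$ and $w^b$, Proposition \ref{summtec} supplies the uniform pointwise bound $\frac{c}{(1+|\lam|^2)t^l}$ for the single nonzero entry, on $\R$ and on $L\cup\bar L$ respectively, with $l$ arbitrary once $k=4q+1$ is taken large. First I would take the supremum, which yields the $L^\infty$ bound $Ct^{-l}$ since $(1+|\lam|^2)^{-1}\le 1$. For $L^1$ and $L^2$ I would integrate: on $\R$ one has $\int_\R (1+|\lam|^2)^{-1}\,d\lam=\pi$ and $\int_\R (1+|\lam|^2)^{-2}\,d\lam<\infty$, giving $\|w^a\|_{L^1(\R)},\|w^a\|_{L^2(\R)}\le Ct^{-l}$; on $L\cup\bar L$ I would parametrize by the ray variable $u$ (so $|d\lam|$ is a bounded multiple of $\lam_0\,du$) and use that $(1+|\lam|^2)^{-1}$ stays integrable along the rays, producing the same $Ct^{-l}$ bounds for $w^b$.

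For $w^c$ the decisive input is the Gaussian bound $Ce^{-4t\lam_0^2u^2}$ on $L_\eps\cup\bar L_\eps$, where $u\ge\eps$ and $\tau=t\lam_0^2$. The $L^\infty$ estimate is immediate from $u\ge\eps$, giving $e^{-4\eps^2\tau}$. For $L^1$ and $L^2$ I would parametrize the rays by $u$ with $|d\lam|$ a bounded multiple of $\lam_0\,du$, and integrate the Gaussian; since $\int_\eps^\infty e^{-c\tau u^2}\,du$ is bounded by a constant times $\tau^{-1/2}e^{-c'\tau}$, the integration introduces exactly the prefactors $1$, $\lam_0^{1/2}$, $\lam_0$ appearing in the three norms, while the absolute rate $\gam_\eps=\min(4\eps^2,\frac{2\eps^2}{3M})$ records the splitting of the exponent into a clean decay part $\gam_\eps\tau$ and an integrable remainder, the $\min$ accounting for the two sub-regimes needed to keep all constants uniform for $\lam_0<M$. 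Tracking this $\lam_0$-dependence uniformly is the one place that demands care, and it is the main (though still routine) obstacle.

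Finally, the uniform $L^2$ bounds on $\Sig_1$ follow by assembling the pieces. Since $w^e=w^a+w^b+w^c$, the estimates just proved give $\|w^e\|_{L^2(\Sig_1)}\le Ct^{-l}+C\lam_0^{1/2}e^{-\gam_\eps\tau}\le C$. The remaining part $w'$ is supported on the truncated cross $\Sig_1'$, which has length bounded by a multiple of $\lam_0\eps\le M\eps$ and on which the integrand is a rational $R$-term times a bounded $\dta$-factor and a phase of modulus at most $1$ (since $\re i\tha\ge 0$ there), so $\|w'\|_{L^2(\Sig_1)}\le C$. Writing $w^{1'}_{x,t,\dta}=w'+w^e$ and applying the triangle inequality then gives $\|w^{1'}_{x,t,\dta}\|_{L^2(\Sig_1)}\le C$, completing the plan.
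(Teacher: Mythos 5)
Your proposal is correct and follows essentially the same route as the paper: the paper simply asserts that these estimates are ``immediate'' from the uniform boundedness of $\dta(\lam)^{\pm 1}$ and the pointwise bounds of Proposition \ref{summtec} (noting that $L^2$ bounds follow from the $L^1$ and $L^\infty$ ones), and your write-up just fills in the routine integrations along the rays, including the $\lam_0\,du$ arc-length factor that produces the $\lam_0^{1/2}$ and $\lam_0$ prefactors in the $w^c$ estimates and the boundedness of the $R$-terms on the short contour $\Sig_1'$ for $\|w'\|_{L^2}$. The only cosmetic difference is that you integrate directly for the $L^2$ bounds rather than interpolating between $L^1$ and $L^\infty$, and your bound $Ce^{-4\eps^2\tau}$ is in fact stronger than the stated one with $\gam_\eps=\min(4\eps^2,\frac{2\eps^2}{3M})\le 4\eps^2$, so the stated inequalities follow a fortiori.
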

\begin{proposition}
In the case, $\lam_0<M$ and $\tau\rightarrow \infty$, $(1-C_{w'})^{-1}:L^2(\Sig_1)\rightarrow L^2(\Sig_1)$ exists and is uniformly bounded:
\be \label{prow'bound}
||(1-C_{w'})^{-1}||_{L^2(\Sig_1)}\le C.
\ee
\end{proposition}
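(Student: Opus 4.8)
The plan is to prove (\ref{prow'bound}) by showing that, in the regime $\lam_0<M$, $\tau\to\infty$, the operator $C_{w'}$ is an asymptotically small perturbation of the singular integral operator attached to an explicitly solvable model problem, and then to invoke the second resolvent identity. A direct Neumann series is unavailable because $\|C_{w'}\|_{L^2(\Sig_1)}$ is not small; however, since the hypothesis requires only large $\tau$, it suffices to control $C_{w'}$ relative to its local model at the stationary point $\lam_0$.

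First I would localize and rescale. Recall from (\ref{bcsolve1}) that the bounds on $C_\pm:L^2(\Sig_1)\to L^2(\Sig_1)$ are independent of $\lam_0$, and that $w'$ is supported only on the truncated inner cross $\Sig_1'=\Sig_1\setminus(\R\cup L_\eps\cup\bar L_\eps)$, so the jump consists solely of the off-diagonal triangular factors of type $R$ carried by $L$ and $\bar L$ near $\lam_0$. Introducing the change of variable $z=2\sqrt{t}\,(\lam-\lam_0)$, under which $2it\tha=i(z^2-4\tau)$ and the Cauchy operator is invariant (so the resolvent norm is unchanged), $\Sig_1'$ maps to a fixed cross whose four rays have length $\sim 2\eps\sqrt{\tau}\to\infty$. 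After conjugating away the constant unimodular factor $e^{-4i\tau\hat\sig_3}$, the transformed weight is governed by $\rho\bigl(\lam_0+z/(2\sqrt{t})\bigr)\,e^{\pm iz^2}$, in which the exponential provides Gaussian decay along each ray in accordance with the signature of $\re(i\tha)$ in Figure 5.

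Next I would identify the limit. Replacing $\rho\bigl(\lam_0+z/(2\sqrt t)\bigr)$ by its value $\rho(\lam_0)$ at the stationary point produces the weight $w^0$ of the standard parabolic-cylinder model problem, which is explicitly solvable; by the Beals--Coifman correspondence (see \cite{bc}) together with the vanishing lemma furnished by the signature of $\re(i\tha)$, the operator $1-C_{w^0}$ is invertible with a bound depending only on $|\rho(\lam_0)|\le\sup|\rho|$. The difference $w'-w^0$ (after rescaling) is controlled by the modulus of continuity of $\rho$ near $\lam_0$ and by the Gaussian factor, giving $\|C_{w'}-C_{w^0}\|_{L^2}\to 0$ as $\tau\to\infty$, uniformly for $\lam_0<M$. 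The second resolvent identity then writes $1-C_{w'}=(1-C_{w^0})\bigl[\,\id-(1-C_{w^0})^{-1}(C_{w^0}-C_{w'})\,\bigr]$, whose right factor is invertible by Neumann series once $\tau$ is large, yielding (\ref{prow'bound}).

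I expect the uniform operator-norm convergence $\|C_{w'}-C_{w^0}\|_{L^2}\to 0$ to be the main obstacle. One must show it holds uniformly across the whole range of $\lam_0<M$ --- in particular as $\lam_0\to 0$, where $t$ must grow to keep $\tau$ large and the scaling degenerates --- and one must absorb the error from the finite truncation of the cross at $u=\eps$, where the replacement of $\rho$ by $\rho(\lam_0)$ and the cutoff of the tails contribute only exponentially small terms of the type already estimated in Lemma \ref{techlema}. Once these weight estimates are in place, the operator-norm bound follows from the $\lam_0$-independent boundedness of $C_\pm$, and the uniform resolvent bound (\ref{prow'bound}) is immediate.
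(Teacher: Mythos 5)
You should first be aware that the paper never actually proves this proposition: it is stated bare (as is the corollary following it), and its justification is deferred, in effect, to the machinery the paper develops afterwards --- Lemma \ref{opralema}, which reduces the boundedness of $(1-C_{w'})^{-1}$ on $L^2(\Sig_1)$ to that of $(1'-C'_{w'})^{-1}$ on the truncated cross $\Sig_1'$ (the support of $w'$), then the scaling operator, the local representation (\ref{dtasolasyp}) of $\dta$, the convergence statement (\ref{asymptotic}), and the explicit parabolic-cylinder solution of the model problem (\ref{N0RHP}). This is precisely the Deift--Zhou argument of \cite{dz}. Your plan --- localize to the support of $w'$, rescale using the $\lam_0$-independence/scale-invariance of the Cauchy operators, compare $C_{w'}$ in operator norm with the operator attached to an explicitly solvable model, then conclude by the second resolvent identity and a Neumann series --- is exactly that argument, so in approach you coincide with what the paper implicitly relies on.

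Two concrete points in your write-up need repair, though neither invalidates the strategy. First, the weight $w'$ is the $\dta$-conjugated object: its entries on $L'$ and $\bar L'$ are of the form $(R)^j\dta^{\pm2}e^{\mp2it\tha}$, not merely $\rho\,e^{\mp2it\tha}$. By (\ref{dtasolasyp}), $\dta^{\pm2}$ behaves near $\lam_0$ like $(\lam-\lam_0)^{\pm2i\nu}$, i.e.\ like $z^{\pm2i\nu}$ times unimodular constants after your rescaling $z=2\sqrt{t}(\lam-\lam_0)$. These oscillatory powers are exactly what makes the limiting problem the parabolic-cylinder model (\ref{N0RHP}) rather than a constant-coefficient one. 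As written, your formula for the rescaled weight omits them: if your $w^0$ literally means $\rho(\lam_0)e^{\pm iz^2}$, then $\|C_{w'}-C_{w^0}\|_{L^2}\not\to0$ and the comparison step fails; if instead $w^0$ is the genuine parabolic-cylinder weight (as your own phrase suggests), then your description of the rescaled $w'$ must be amended to carry the $z^{\pm2i\nu}$ factors. Second, the error incurred in replacing $\rho$ and $\dta$ by their model forms is only algebraic, of order $\log t/t^{1/2}$ --- this is exactly what the paper records just after (\ref{asymptotic}) --- not exponentially small; only the truncation of the tails at $u\ge\eps$ is exponentially small (Lemma \ref{techlema}). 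Since all the proposition requires is that the operator-norm difference tend to zero uniformly for $\lam_0<M$, the algebraic rate suffices, but the claim of exponential smallness should be corrected in any complete version.
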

\begin{corollary}
In the case, $\lam_0<M$ and $\tau\rightarrow \infty$, $(1-C_{w^{1'}_{x,t,\dta}})^{-1}:L^2(\Sig_1)\rightarrow L^2(\Sig_1)$ exists and is uniformly bounded:
\be
||(1-C_{w^{1'}_{x,t,\dta}})^{-1}||_{L^2(\Sig_1)}\le C.
\ee
\end{corollary}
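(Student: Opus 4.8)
The plan is to treat this corollary as a perturbation of the Proposition just established. Since the assignment $w\mapsto C_{w}$ defined in (\ref{bcsolve2}) is linear in $w$, the splitting $w^{1'}_{x,t,\dta}=w'+w^e$ gives $C_{w^{1'}_{x,t,\dta}}=C_{w'}+C_{w^e}$, and therefore
\[
\id-C_{w^{1'}_{x,t,\dta}}=(\id-C_{w'})-C_{w^e}=(\id-C_{w'})\bigl(\id-(\id-C_{w'})^{-1}C_{w^e}\bigr).
\]
The preceding Proposition guarantees that $(\id-C_{w'})^{-1}$ exists and satisfies $\|(\id-C_{w'})^{-1}\|_{L^2(\Sig_1)}\le C$ uniformly for $\lam_0<M$ as $\tau\to\infty$. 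Hence it suffices to show that $\|(\id-C_{w'})^{-1}C_{w^e}\|_{L^2(\Sig_1)}<\tfrac12$ for $\tau$ large; a Neumann-series argument then produces $(\id-C_{w^{1'}_{x,t,\dta}})^{-1}$ with a uniform bound.

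The key quantitative step is to prove $\|C_{w^e}\|_{L^2\to L^2}\to 0$. Since the operators $C_{\pm}$ are bounded on $L^2(\Sig_1)$ with bounds independent of $\lam_0$ (as noted after (\ref{bcsolve1})), formula (\ref{bcsolve2}) yields $\|C_{w^e}\|_{L^2\to L^2}\le c\,\|w^e\|_{L^\infty(\Sig_1)}$. I would then decompose $w^e=w^a+w^b+w^c$ as in (\ref{opsum}) and invoke Lemma \ref{techlema}, which gives $\|w^a\|_{L^\infty},\|w^b\|_{L^\infty}\le Ct^{-l}$ and $\|w^c\|_{L^\infty}\le Ce^{-\gam_\eps\tau}$. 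Choosing $l$ large (which the construction in Proposition \ref{summtec} permits) and using $\gam_\eps=\min(4\eps^2,\tfrac{2\eps^2}{3M})>0$, one obtains uniform smallness: because $\lam_0<M$ forces $t=\tau/\lam_0^2>\tau/M^2$, the polynomial factor satisfies $t^{-l}<(M^2/\tau)^l\to 0$, while the exponential factor decays in $\tau$ directly. Thus $\|w^e\|_{L^\infty(\Sig_1)}\to 0$, whence $\|(\id-C_{w'})^{-1}C_{w^e}\|\le C\|C_{w^e}\|\to 0$ as $\tau\to\infty$, and in particular this norm is $<\tfrac12$ for $\tau$ sufficiently large.

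With this smallness in hand, $\id-(\id-C_{w'})^{-1}C_{w^e}$ is invertible with inverse bounded by $2$, and consequently
\[
(\id-C_{w^{1'}_{x,t,\dta}})^{-1}=\bigl(\id-(\id-C_{w'})^{-1}C_{w^e}\bigr)^{-1}(\id-C_{w'})^{-1}
\]
exists and is bounded in $L^2(\Sig_1)$ by $2C$, which is the assertion. I should stress that essentially all the analytic difficulty has already been absorbed into the preceding Proposition: the genuinely hard estimate is the uniform invertibility of $\id-C_{w'}$ on the rescaled cross contour $\Sig_1'$, which rests on the $\lam_0$-independence of the Cauchy-operator bounds and on comparison with the exactly solvable local model near the stationary point $\lam_0$. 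The corollary proper is only the observation that reinstating the remainder $w^e$ does not destroy invertibility. The one point requiring genuine care is that the smallness of $\|w^e\|_{L^\infty}$ be uniform over the entire regime $\lam_0<M$, $\tau\to\infty$ taken together, rather than for each fixed $\lam_0$ separately; this uniformity is precisely what the $\lam_0$-independent constants of Lemma \ref{techlema} and the uniform positivity of $\gam_\eps$ provide.
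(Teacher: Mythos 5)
Your proof is correct and is essentially the intended argument: the paper states this corollary without any proof of its own, relying (as in Deift--Zhou \cite{dz}) on exactly the perturbation you spell out, namely $C_{w^{1'}_{x,t,\dta}}=C_{w'}+C_{w^e}$ by linearity, $\|C_{w^e}\|_{L^2(\Sig_1)\to L^2(\Sig_1)}\le c\|w^e\|_{L^{\infty}(\Sig_1)}\to 0$ uniformly for $\lam_0<M$ via Lemma \ref{techlema} and the $\lam_0$-independence of the Cauchy-operator bounds, and then a Neumann series applied to the factorization $(\id-C_{w'})\bigl(\id-(\id-C_{w'})^{-1}C_{w^e}\bigr)$ together with the uniform bound (\ref{prow'bound}). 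Your emphasis on the smallness of $\|w^e\|_{L^{\infty}}$ being uniform over the joint regime $\lam_0<M$, $\tau\to\infty$ is precisely the point that makes the corollary legitimate, so nothing is missing.
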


A simple computation shows that
\be
\ba{rl}
\int_{\Sig_1}((1-C_{w^{1'}_{x,t,\dta}})^{-1}\id)w^{1'}_{x,t,\dta}=&\int_{\Sig_1}((1-C_{w'})^{-1}\id)w'+\int_{\Sig_1}w^e\\
&+\int_{\Sig_1}((1-C_{w'})^{-1}(C_{w'}\id))w^{1'}_{x,t,\dta}\\
&+\int_{\Sig_1}((1-C_{w'})^{-1}(C_{w'}\id))w^e\\
&+\int_{\Sig_1}((1-C_{w'})^{-1}C_{w^e}(1-C_{w^{1'}_{x,t,\dta}})^{-1})\\
&\times (C_{w^{1'}_{x,t,\dta}}\id)w^{1'}_{x,t,\dta}\\
=&\int_{\Sig_1}((1-C_{w'})^{-1}\id)w'+\Rmnum{1}+\Rmnum{2}+\Rmnum{3}+\Rmnum{4}.
\ea
\ee
In the case $\lam_0<M$, from Lemma \ref{techlema} it follows that
\be
\ba{c}
\ba{rl}
|\Rmnum{1}|&\le \|w^a\|_{L^1(\R)}+\|w^b\|_{L^1(L\cup \bar L)}+\|w^c\|_{L^1(L_{\eps}\cup \bar L_{\eps})}\\
&\le Ct^{-l}+Ct^{-l}+C\lam_0\tau^{-l}\\
&\le C\lam_0\tau^{-l},\qquad \qquad \mbox{as $t^l\ge C(\tau^l/\lam_0)$ for $\lam_0<M$},
\ea
\\
\ba{rl}
|\Rmnum{2}|&\le \|(1-C_{w'})^{-1}\|_{L^2(\Sig_1)}\|C_{w^e}\id\|_{L^2(\Sig_1)}\|w^{1'}_{x,t,\dta}\|_{L^2(L_{\Sig_1}}\\
&\le C\|w^e\|_{L^2(\Sig_1)}\|w^{1'}_{x,t,\dta}\|_{L^2(\Sig_1)},\qquad \qquad \mbox{by (\ref{prow'bound})}.
\ea
\ea
\ee
As above, we have
\be \label{webound}
\|w^e\|_{L^2(\Sig_1)}\le ct^{-l}+ct^{-l}+c\sqrt{\lam_0}\tau^{-l}\le c\sqrt{\lam_0}\tau^{-l},
\ee
and
\be
\|w^{1'}_{x,t,\dta}\|_{L^2(\Sig_1)}\le \|w^e\|_{L^2(\Sig_1)}+\|w^{'}\|_{L^2(\Sig_1)}\le c\sqrt{\lam_0}\tau^{-l}+c(t\lam_0)^{-\frac{1}{4}}\le c(t\lam_0)^{-\frac{1}{4}}(1+\tau^{\frac{1}{4}-l}).
\ee
Thus,
\be
|\Rmnum{2}|\le c\lam_0\tau^{-(l+\frac{1}{4})}(1+\tau^{\frac{1}{4}-l}).
\ee
Continuing, we obtain
\be
\ba{rl}
|\Rmnum{3}|&\le \|(1-C_{w'}^{-1})\|_{L^2(\Sig_1)}\|C_{w'}\id\|_{L^2(\Sig_1)}\|w^e\|_{L^2(\Sig_1)}\\
&\le c\|w'\|_{L^2(\Sig_1)}\|w^e\|_{L^2(\Sig_1)}\le c\lam_0\tau^{-(l+\frac{1}{4})},
\ea
\ee
and
\be
\ba{rl}
|\Rmnum{4}|&\le \|(1-C_{w'}^{-1})\|_{L^2(\Sig_1)}\|(1-C_{w^{1'}_{x,t,\dta}}^{-1})\|_{L^2(\Sig_1)}\|C_{w^e}\|_{L^2(\Sig_1)}\|C_{w^{1'}_{x,t,\dta}}\id\|_{L^2(\Sig_1)}\|w^{1'}_{x,t,\dta}\|_{L^2(\Sig_1)}\\
&\le c\|w^e\|_{L^{\infty}(\Sig_1)}\|w^{1'}_{x,t,\dta}\|_{L^2(\Sig_1)}
\ea
\ee
From Lemma \ref{techlema}, for $\lam_0<M$, however, we have $\|w^e\|_{L^{\infty}(\Sig_1)}\le c\tau^{-l}$ and
\be
\ba{rl}
\|w^{1'}_{x,t,\dta}\|^2_{L^2(\Sig_1)}&\le 4(\|w^a\|^2_{L^2(\Sig_1)}+\|w^b\|^2_{L^2(\Sig_1)}+\|w^c\|^2_{L^2(\Sig_1)}+\|w^{'}\|^2_{L^2(\Sig_1)})\\
&\le c(t^{-2l}+t^{-2l}+\lam_0\tau^{-\frac{1}{2}}+(t\lam_0)^{-\frac{1}{2}})\le c(t\lam_0)^{-\frac{1}{2}},
\ea
\ee
which implies that
\be
|\Rmnum{4}|\le c\tau^{-l}(t\lam_0)^{-\frac{1}{2}}
\ee
To summarize, for $\lam_0<M$,
\be
|\Rmnum{1}+\Rmnum{2}+\Rmnum{3}+\Rmnum{4}|\le c\lam_0\tau^{-l}
\ee
as $\tau \rightarrow \infty$.
\par
We have proven the following result:
\begin{lemma}
\be \label{qinfty}
\tilde q(x,t)=(\int_{\Sig_1}[\sig_3,((1-C_{w'})^{-1}\id)(\zeta)w'(\zeta)]\frac{d\zeta}{2\pi i})_{21}+O(\lam_0\tau^{-l}).
\ee
\end{lemma}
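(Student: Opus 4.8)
The plan is to read $\tilde q$ off the Beals--Coifman representation of the solution of the Riemann--Hilbert problem (\ref{N1'RHP}) and then replace the full jump data $w^{1'}_{x,t,\dta}$ by the truncated data $w'$ at the cost of the advertised error. Combining (\ref{bcsolve3}) with (\ref{qendsolve}) and expanding $(\zeta-\lam)^{-1}=-\lam^{-1}+O(\lam^{-2})$ as $\lam\to\infty$ gives the exact formula displayed just above the lemma, in which $\tilde q$ is a fixed matrix entry of $\int_{\Sig_1}((1-C_{w^{1'}_{x,t,\dta}})^{-1}\id)\,w^{1'}_{x,t,\dta}\,\frac{d\zeta}{2\pi i}$. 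Hence the whole task reduces to estimating the difference between this integral and $\int_{\Sig_1}((1-C_{w'})^{-1}\id)\,w'\,\frac{d\zeta}{2\pi i}$.

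First I would invoke the splitting $w^{1'}_{x,t,\dta}=w'+w^e$ with $w^e=w^a+w^b+w^c$ from (\ref{opsum}) and apply the second resolvent identity to write $(1-C_{w^{1'}_{x,t,\dta}})^{-1}$ in terms of $(1-C_{w'})^{-1}$ and $C_{w^e}$. Substituting this expansion into the integral and collecting terms produces precisely the decomposition recorded above the lemma: the leading contribution $\int_{\Sig_1}((1-C_{w'})^{-1}\id)\,w'$ together with four remainders $\Rmnum{1},\Rmnum{2},\Rmnum{3},\Rmnum{4}$, where $\Rmnum{1}=\int_{\Sig_1}w^e$ is the bare contribution of the small data and $\Rmnum{2},\Rmnum{3},\Rmnum{4}$ each carry one or two factors of a resolvent acting on $C_{w^e}\id$ or $C_{w'}\id$.

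Next I would estimate each remainder using Lemma \ref{techlema} and the uniform resolvent bounds (\ref{prow'bound}) and its corollary. Term $\Rmnum{1}$ is controlled directly by the $L^1$ norms of $w^a,w^b,w^c$, giving $|\Rmnum{1}|\le C\lam_0\tau^{-l}$ after using $t^l\ge C\tau^l/\lam_0$ for $\lam_0<M$. Terms $\Rmnum{2}$ and $\Rmnum{3}$ are bounded by products of $L^2$ norms with the operator bound (\ref{prow'bound}); feeding in $\|w^e\|_{L^2(\Sig_1)}\le c\sqrt{\lam_0}\,\tau^{-l}$ from (\ref{webound}) and $\|w^{1'}_{x,t,\dta}\|_{L^2(\Sig_1)}\le c(t\lam_0)^{-\frac{1}{4}}(1+\tau^{\frac{1}{4}-l})$ yields the stated powers of $\tau$. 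Term $\Rmnum{4}$, which contains both resolvents, is handled by trading an $L^2$ bound for the $L^\infty$ bound $\|w^e\|_{L^\infty(\Sig_1)}\le c\tau^{-l}$ together with $\|w^{1'}_{x,t,\dta}\|^2_{L^2(\Sig_1)}\le c(t\lam_0)^{-\frac{1}{2}}$. Summing the four bounds gives $|\Rmnum{1}+\Rmnum{2}+\Rmnum{3}+\Rmnum{4}|\le c\lam_0\tau^{-l}$ as $\tau\to\infty$, and extracting the relevant matrix entry produces (\ref{qinfty}).

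The hard part will be the bookkeeping in the resolvent expansion and verifying that every remainder is genuinely $O(\lam_0\tau^{-l})$ uniformly for $\lam_0<M$. The delicate point is reconciling the two scales of decay: the $h_1$ and $h_2$ contributions decay like $t^{-l}$ while the rational ($R$) contribution decays like $e^{-\gam_\eps\tau}$, and these must be weighed against the $\lam_0$-dependent $L^2$ norms so that no stray factor of $\lam_0^{-1}$ spoils the uniformity. Term $\Rmnum{4}$ is the most demanding, since keeping the power of $\tau$ as large as $l$ forces the switch from the $L^2$ to the $L^\infty$ estimate on $w^e$ there.
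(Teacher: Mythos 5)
Your proposal is correct and takes essentially the same route as the paper: the Beals--Coifman representation (\ref{bcsolve3}) inserted into (\ref{qendsolve}), the splitting $w^{1'}_{x,t,\dta}=w'+w^e$ with $w^e=w^a+w^b+w^c$, the resolvent expansion into the leading term plus the remainders $\Rmnum{1}$--$\Rmnum{4}$, and the identical norm estimates from Lemma \ref{techlema}, (\ref{prow'bound}), and (\ref{webound}), including the switch to the $L^{\infty}$ bound on $w^e$ in term $\Rmnum{4}$. This is precisely the paper's argument, so there is nothing to add.
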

\par
We now show that, in general, one may always choose to add or to delete a portion of a contour, on which the jump is $\id$, without altering the Riemann-Hilbert problem in the operator sense(see identities (\ref{ideopra1})-(\ref{ideopra3}) below). Suppose that $\Sig^{(1)}$ and $\Sig^{(2)}$ are two oriented skeletons in $\C$ with
\be
\mbox{card}(\Sig^{(1)}\cap \Sig^{(2)})<\infty;
\ee
let $u=u(\lam)=u_+(\lam)+u_-(\lam)$ be a $2\times 2$ matrix-valued function on
\be
\Sig^{(12)}=\Sig^{(1)}\cup \Sig^{(2)}
\ee
with entries in $L^2(\Sig^{(12)})\cap L^{\infty}(\Sig^{(12)})$ and suppose that
\be
u=0\qquad \qquad \mbox{on }\Sig^{(2)}.
\ee
Let
\be
R_{\Sig^{(1)}}\mbox{ denote the restriction map }L^2(\Sig^{(12)})\rightarrow L^2(\Sig^{(1)}),
\ee
\be
\id_{\Sig^{(1)}\rightarrow \Sig^{(12)}}\mbox{ denote the embedding }L^2(\Sig^{(1)})\rightarrow L^2(\Sig^{(12)}),
\ee
\be
C_{u}^{12}:L^2(\Sig^{(12)})\rightarrow L^2(\Sig^{(12)})\mbox{ denote the operator in (\ref{bcsolve2}) with }u\leftrightarrow w_{x,t,\dta}^{1'},
\ee
\be
C_u^1:L^2(\Sig^{(1)})\rightarrow L^2(\Sig^{(1)})\mbox{ denote the operator in (\ref{bcsolve2}) with }u \uparrow \Sig^{(1)}\leftrightarrow w_{x,t,\dta}^{1'},
\ee
\be
C_u^E:L^2(\Sig^{(1)})\rightarrow L^2(\Sig^{(12)})\mbox{ denote the restriction of $C_u^{12}$ to }L^2(\Sig^{(1)}).
\ee
And, finally, let
\be
\left\{
\ba{l}
\id_{\Sig^{(1)}}\mbox{ and }\id_{\Sig^{(12)}}\mbox{ denote the identity operators on}\\
L^2(\Sig^{(1)})\mbox{ and }L^2(\Sig^{(12)}),\mbox{ respectively}.
\ea
\right.
\ee
We then have the next lemma:
\begin{lemma} \label{opralema}
\be \label{ideopra1}
C_u^{12}C_u^E=C_u^EC_u^{12},
\ee
\be \label{ideopra2}
(\id_{\Sig^{(1)}}-C_u^1)^{-1}=R_{\Sig^{(1)}}(\id_{\Sig^{(12)}}-C_u^{12})^{-1}\id_{\Sig^{(1)}\rightarrow \Sig^{(12)}},
\ee
\be \label{ideopra3}
(\id_{\Sig^{(12)}}-C_u^{12})^{-1}=\id_{\Sig^{(12)}}+C_u^E(\id_{\Sig^{(1)}}-C_u^1)^{-1}R_{\Sig^{(1)}},
\ee
in the sense that if the right-hand side of (\ref{ideopra2}),resp. (\ref{ideopra3}), exists, then the left-hand side exists and identity (\ref{ideopra2}),resp. (\ref{ideopra3}), holds true.
\end{lemma}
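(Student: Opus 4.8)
The plan is to reduce the three identities to elementary operator algebra resting on two structural observations about the Beals--Coifman operator. Throughout I abbreviate the embedding $\id_{\Sig^{(1)}\rightarrow\Sig^{(12)}}$ (extension by zero) by $P$ and the restriction $R_{\Sig^{(1)}}$ by $R$, so that $RP=\id_{\Sig^{(1)}}$ while $PR$ is the projection of $L^2(\Sig^{(12)})$ onto the functions supported on $\Sig^{(1)}$. The first observation is that, since $u_{\pm}=0$ on $\Sig^{(2)}$, for every $f$ the products $fu_{\pm}$ are supported on $\Sig^{(1)}$; because $C_u^{12}f=C_+(fu_-)+C_-(fu_+)$ by (\ref{bcsolve2}), the operator $C_u^{12}$ depends on its argument only through $Rf$, i.e. $C_u^{12}=C_u^{12}PR$. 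The second observation is that, as $\mbox{card}(\Sig^{(1)}\cap\Sig^{(2)})<\infty$, the Cauchy integral over $\Sig^{(12)}$ of a function supported on $\Sig^{(1)}$ equals the corresponding Cauchy integral over $\Sig^{(1)}$, the adjoined contour $\Sig^{(2)}$ contributing nothing. Hence $C_u^E=C_u^{12}P$ and $C_u^1=RC_u^{12}P$, and combining the two observations yields the factorizations I will use repeatedly,
\be
C_u^{12}=C_u^E R,\qquad C_u^1=RC_u^E.
\ee

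Identity (\ref{ideopra1}) is then immediate: substituting the two factorizations gives
\be
C_u^{12}C_u^E=(C_u^E R)C_u^E=C_u^E(RC_u^E)=C_u^E C_u^1,
\ee
so that $C_u^E$ intertwines $C_u^1$ on $L^2(\Sig^{(1)})$ with $C_u^{12}$ on $L^2(\Sig^{(12)})$; this is exactly (\ref{ideopra1}), the operator on its right being $C_u^{12}$ read through the restriction to $\Sig^{(1)}$.

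Next I would establish the resolvent identity (\ref{ideopra3}) by a direct two-sided computation, which at the same time delivers the asserted conditional existence. Assume $(\id_{\Sig^{(1)}}-C_u^1)^{-1}$ exists and put $K=\id_{\Sig^{(12)}}+C_u^E(\id_{\Sig^{(1)}}-C_u^1)^{-1}R$. Using the intertwining just proved,
\be
(\id_{\Sig^{(12)}}-C_u^{12})C_u^E=C_u^E-C_u^E C_u^1=C_u^E(\id_{\Sig^{(1)}}-C_u^1),
\ee
so that $(\id_{\Sig^{(12)}}-C_u^{12})C_u^E(\id_{\Sig^{(1)}}-C_u^1)^{-1}R=C_u^E R=C_u^{12}$, whence $(\id_{\Sig^{(12)}}-C_u^{12})K=\id_{\Sig^{(12)}}$. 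For the opposite order one uses $R(\id_{\Sig^{(12)}}-C_u^{12})=(\id_{\Sig^{(1)}}-C_u^1)R$ (again from the factorizations) to obtain $K(\id_{\Sig^{(12)}}-C_u^{12})=\id_{\Sig^{(12)}}$. Thus $K$ is a genuine two-sided inverse, proving (\ref{ideopra3}) together with the existence of $(\id_{\Sig^{(12)}}-C_u^{12})^{-1}$.

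Finally, (\ref{ideopra2}) follows by sandwiching (\ref{ideopra3}) between $R$ and $P$: since $RP=\id_{\Sig^{(1)}}$ and $RC_u^E=C_u^1$,
\be
R(\id_{\Sig^{(12)}}-C_u^{12})^{-1}P=\id_{\Sig^{(1)}}+C_u^1(\id_{\Sig^{(1)}}-C_u^1)^{-1}=(\id_{\Sig^{(1)}}-C_u^1)^{-1},
\ee
and existence transfers in the same way. The one step that is more than bookkeeping — and the main obstacle — is the second structural observation: that the Cauchy operator on $\Sig^{(12)}$, restricted to functions supported on $\Sig^{(1)}$ and read back on $\Sig^{(1)}$, coincides with the Cauchy operator on $\Sig^{(1)}$. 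This is precisely where the hypothesis $\mbox{card}(\Sig^{(1)}\cap\Sig^{(2)})<\infty$ enters, guaranteeing that the boundary values along $\Sig^{(1)}$ are unaffected by the presence of $\Sig^{(2)}$ except on a negligible set; once this is in hand, everything else is the algebra above.
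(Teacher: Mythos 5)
Your algebraic framework is essentially the paper's own proof written in operator form: the factorizations $C_u^{12}=C_u^E R_{\Sig^{(1)}}$ and $C_u^1=R_{\Sig^{(1)}}C_u^E$ are exactly the pointwise identities the paper uses, your reading of (\ref{ideopra1}) as the intertwining relation $C_u^{12}C_u^E=C_u^EC_u^1$ (rather than the literal $C_u^EC_u^{12}$, which is not even a well-defined composition since $C_u^{12}$ does not map into $L^2(\Sig^{(1)})$) is the correct interpretation of what is evidently a typo, and your two-sided verification of (\ref{ideopra3}) under the hypothesis that $(\id_{\Sig^{(1)}}-C_u^1)^{-1}$ exists is complete and correct.

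The gap is in (\ref{ideopra2}), and it is a gap of logical direction. The lemma asserts: if the right-hand side of (\ref{ideopra2}) exists, i.e.\ if $(\id_{\Sig^{(12)}}-C_u^{12})^{-1}$ exists, then the left-hand side $(\id_{\Sig^{(1)}}-C_u^1)^{-1}$ exists and the formula holds. You obtain (\ref{ideopra2}) by sandwiching (\ref{ideopra3}) between $R_{\Sig^{(1)}}$ and the embedding $P=\id_{\Sig^{(1)}\rightarrow\Sig^{(12)}}$; but your (\ref{ideopra3}) was proved under the hypothesis that $(\id_{\Sig^{(1)}}-C_u^1)^{-1}$ exists, which is precisely the conclusion of (\ref{ideopra2}). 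So you have shown that existence of $(\id_{\Sig^{(1)}}-C_u^1)^{-1}$ implies both formulas, but not the stated conditional for (\ref{ideopra2}); the phrase ``existence transfers in the same way'' hides exactly the missing argument, and it cannot be waved away, since a bounded operator with a right inverse need not be invertible. The repair uses your own first observation, $C_u^{12}=C_u^{12}PR_{\Sig^{(1)}}$, and is the paper's ``conversely'' step. The intertwining $(\id_{\Sig^{(1)}}-C_u^1)R_{\Sig^{(1)}}=R_{\Sig^{(1)}}(\id_{\Sig^{(12)}}-C_u^{12})$ shows at once that $R_{\Sig^{(1)}}(\id_{\Sig^{(12)}}-C_u^{12})^{-1}P$ is a right inverse of $\id_{\Sig^{(1)}}-C_u^1$. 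For the left-inverse property, note that for $f\in L^2(\Sig^{(1)})$,
\be
P(\id_{\Sig^{(1)}}-C_u^1)f=(\id_{\Sig^{(12)}}-C_u^{12})Pf+g_2,\qquad g_2:=(\id_{\Sig^{(12)}}-PR_{\Sig^{(1)}})C_u^Ef,
\ee
where $g_2$ vanishes on $\Sig^{(1)}$; consequently $C_u^{12}g_2=C_u^{12}PR_{\Sig^{(1)}}g_2=0$, hence $(\id_{\Sig^{(12)}}-C_u^{12})g_2=g_2$ and therefore $(\id_{\Sig^{(12)}}-C_u^{12})^{-1}g_2=g_2$, the inverse existing by hypothesis. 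Applying $R_{\Sig^{(1)}}(\id_{\Sig^{(12)}}-C_u^{12})^{-1}$ to the displayed identity then gives
\be
R_{\Sig^{(1)}}(\id_{\Sig^{(12)}}-C_u^{12})^{-1}P(\id_{\Sig^{(1)}}-C_u^1)f=R_{\Sig^{(1)}}(Pf+g_2)=f,
\ee
so the candidate is a two-sided inverse and (\ref{ideopra2}) holds under its stated hypothesis. With this step added your proof is complete and coincides with the paper's.
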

\begin{proof}
The proof of identity (\ref{ideopra1}) is trivial. If $g\in L^2(\Sig^{(12)})$ and $\lam \in \Sig^{(1)}$, then
\be
\ba{rl}
((\id_{\Sig^{(1)}}-C_u^1)R_{\Sig^{(1)}}g)(\lam)&=g(\lam)-(C_u^1R_{\Sig^{(1)}}g)(\lam)\\
&=g(\lam)-(C_u^{12}g)(\lam)\\
&=((\id_{\Sig^{(12)}}-C_u^{12})g)(\lam).
\ea
\ee
Hence, for $f\in L^2(\Sig^{(1)})$, we have
\be
\ba{rl}
((\id_{\Sig^{(1)}}-C_u^1)R_{\Sig^{(1)}}(\id_{\Sig^{(12)}}-C_u^{12})^{-1}\id_{\Sig^{(1)}\rightarrow \Sig^{(12)}}f)(\lam)&=((\id_{\Sig^{(12)}}-C_u^{12})\\&((\id_{\Sig^{(12)}}-C_u^{12})^{-1}\id_{\Sig^{(1)}\rightarrow \Sig^{(12)}}f))(\lam)\\
&=(\id_{\Sig^{(1)}\rightarrow \Sig^{(12)}}f)(\lam)=f(\lam).
\ea
\ee
Conversely, if $f\in L^2(\Sig^{(1)})$ and $\lam \in \Sig^{(1)}$, then
\[
(\id_{\Sig^{(1)}\rightarrow \Sig^{(12)}}(\id_{\Sig^{(1)}}-C_u^1)f)(\lam)=((\id_{\Sig^{(1)}\rightarrow \Sig^{(12)}}f)-C_u^{12}(\id_{\Sig^{(1)}\rightarrow \Sig^{(12)}}f))(\lam)
\]
and hence,
\[
(\id_{\Sig^{(1)}\rightarrow \Sig^{(12)}}(\id_{\Sig^{(1)}}-C_u^1)f)(\lam)=(\id_{\Sig^{(12)}}-C_u^{12})(\id_{\Sig^{(1)}\rightarrow \Sig^{(12)}}f)+g_2
\]
where $g_2=0$ on $\Sig^{(1)}$. However $(\id_{\Sig^{(12)}}-C_u^{12})g_2=\id_{\Sig^{(12)}}g_2=g_2$, and so $(\id_{\Sig^{(12)}}-C_u^{12})^{-1}g_2=g_2$. It follows that
\[
\ba{rl}
R_{\Sig^{(1)}}(\id_{\Sig^{(12)}}-C_u^{12})^{-1}\id_{\Sig^{(1)}\rightarrow \Sig^{(12)}}f&=R_{\Sig^{(1)}}(\id_{\Sig^{(12)}}-C_u^{12})^{-1}(((\id_{\Sig^{(12)}}-C_u^{12})\id_{\Sig^{(1)}\rightarrow \Sig^{(12)}}f)+g_2)\\
&=R_{\Sig^{(1)}}\id_{\Sig^{(1)}\rightarrow \Sig^{(12)}}f+R_{\Sig^{(1)}}g_2=f.
\ea
\]
This proves identity (\ref{ideopra2}).
\par
On the other hand, using (\ref{ideopra1}), we get
\[
\ba{lr}
(\id_{\Sig^{(12)}}-C_u^{12})(\id_{\Sig^{(12)}}+C_u^E(\id_{\Sig^{(1)}}-C_u^{1})^{-1}R_{\Sig^{(1)}})&\\
=(\id_{\Sig^{(12)}}-C_u^{12})+C_u^E(\id_{\Sig^{(1)}}-C_u^{1})(\id_{\Sig^{(1)}}-C_u^{1})^{-1}R_{\Sig^{(1)}}&\\
=\id_{\Sig^{(12)}}-C_u^{12}+C_u^ER_{\Sig^{(1)}}=\id_{\Sig^{(12)}}&.
\ea
\]
Conversely, $R_{\Sig^{(1)}}(\id_{\Sig^{(12)}}-C_u^{12})=(\id_{\Sig^{(1)}}-C_u^{1})R_{\Sig^{(1)}}$, and so we have
\[
\ba{lr}
(\id_{\Sig^{(12)}}+C_u^E(\id_{\Sig^{(1)}}-C_u^{1})^{-1}R_{\Sig^{(1)}})(\id_{\Sig^{(12)}}-C_u^{12})&\\
=(\id_{\Sig^{(12)}}-C_u^{12})+C_u^E(\id_{\Sig^{(1)}}-C_u^{1})(\id_{\Sig^{(1)}}-C_u^{1})^{-1}R_{\Sig^{(1)}}&\\
=\id_{\Sig^{(12)}}-C_u^{12}+C_u^ER_{\Sig^{(1)}}=\id_{\Sig^{(12)}}&.
\ea
\]
This proves identity (\ref{ideopra3}) and the lemma.
\end{proof}
We apply Lemma \ref{opralema} to the case $\Sig^{(1)}=\Sig^{'}_1,\Sig^{(12)}=\Sig_1$ and $u=w'$. We learn in particular that
\be
\left\{
\ba{l}
\mbox{the boundedness of $\|(\id-C_{w'})^{-1}\|_{L^2(\Sig_1)}$ is equivalent to}\\
\mbox{the boundedness of $\|(\id^{'}-C'_{w'})^{-1}\|_{L^2(\Sig_1^{'})}$}
\ea
\right.
\ee
(Here $\id'=\id_{\Sig'_1},C'_{w'}=C_{w'}^{\Sig'_1}$.) Also, as in the proof of Lemma \ref{opralema}, from identity (\ref{ideopra2}) we have
\be
\ba{rl}
(\id^{'}-C'_{w'})^{-1}&=R_{\Sig'_1}(\id-C_{w'})^{-1}(\id+g_2),\qquad g_2=0\mbox{ on $\Sig'_1$}\\
&=R_{\Sig'_1}(\id-C_{w'})^{-1}\id+R_{\Sig'_1}g_2\\
&=R_{\Sig'_1}(\id-C_{w'})^{-1}\id.
\ea
\ee
Inserting this identity in formula (\ref{qinfty}) yields the following proposition:
\begin{proposition}
\be \label{qintegral}
\tilde q(x,t)=(\int_{\Sig'_1}[\sig_3,((1'-C'_{w'})^{-1}\id)(\zeta)w'(\zeta)]\frac{d\zeta}{2\pi i})_{21}+O(\lam_0\tau^{-l}).
\ee
\end{proposition}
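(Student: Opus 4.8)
The plan is to deduce (\ref{qintegral}) from Lemma~\ref{qinfty} by transporting the resolvent from the full contour $\Sig_1$ to the truncated contour $\Sig'_1$ via the operator identities of Lemma~\ref{opralema}. First I would start from the representation established in Lemma~\ref{qinfty},
\be
\tilde q(x,t)=\left(\int_{\Sig_1}[\sig_3,((1-C_{w'})^{-1}\id)(\zeta)w'(\zeta)]\frac{d\zeta}{2\pi i}\right)_{21}+O(\lam_0\tau^{-l}),
\ee
and use the defining property of $w'$, namely $w'\equiv 0$ on $\Sig_1\setminus\Sig'_1$. This annihilates the integrand off $\Sig'_1$, so the contour may be shrunk to $\Sig'_1$ without changing the integral, while the error $O(\lam_0\tau^{-l})$ is carried over unchanged.

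The second step replaces the full resolvent by the intrinsic one on $\Sig'_1$. Here I would invoke Lemma~\ref{opralema} with $\Sig^{(1)}=\Sig'_1$, $\Sig^{(12)}=\Sig_1$, and $u=w'$; its hypothesis $u=0$ on $\Sig^{(2)}=\Sig_1\setminus\Sig'_1$ is precisely the vanishing just used. Identity~(\ref{ideopra2}) gives
\be
(\id'-C'_{w'})^{-1}=R_{\Sig'_1}(\id-C_{w'})^{-1}\id_{\Sig'_1\to\Sig_1}.
\ee
Applying both sides to the constant identity matrix, and using that the stray term $g_2$ produced in the proof of Lemma~\ref{opralema} is supported off $\Sig'_1$ (hence is annihilated by $C_{w'}$ and removed by the outer restriction $R_{\Sig'_1}$), yields the pointwise identity
\be
((1-C_{w'})^{-1}\id)(\zeta)=((\id'-C'_{w'})^{-1}\id)(\zeta),\qquad \zeta\in\Sig'_1.
\ee
Substituting this into the already truncated integral produces exactly (\ref{qintegral}).

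The hard part, and the only step needing genuine justification, is verifying that the restriction to $\Sig'_1$ of $(1-C_{w'})^{-1}\id$ coincides with the intrinsic truncated resolvent $(\id'-C'_{w'})^{-1}\id$. Writing $\phi=(1-C_{w'})^{-1}\id$, the subtlety is that the Cauchy operators $C_\pm$ on $\Sig_1$ take boundary values that, near the finitely many self-intersection points of $\Sig_1$, may differ from those of the intrinsic operators $C'_\pm$ on $\Sig'_1$. Since $\mbox{card}(\Sig'_1\cap(\Sig_1\setminus\Sig'_1))<\infty$ and the source $\phi w'$ is supported on $\Sig'_1$, these discrepancies live on a set of measure zero, so $R_{\Sig'_1}C_{w'}\phi=C'_{w'}(R_{\Sig'_1}\phi)$ holds almost everywhere; hence $R_{\Sig'_1}\phi$ solves $(\id'-C'_{w'})(R_{\Sig'_1}\phi)=\id$ on $\Sig'_1$. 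This is exactly the content encoded in Lemma~\ref{opralema}, so once that lemma is in hand the remaining substitution is pure bookkeeping.
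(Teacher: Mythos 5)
Your proposal is correct and follows essentially the same route as the paper: it invokes Lemma \ref{opralema} with $\Sig^{(1)}=\Sig'_1$, $\Sig^{(12)}=\Sig_1$, $u=w'$, applies identity (\ref{ideopra2}) to the constant function $\id$, disposes of the stray term $g_2$ (supported off $\Sig'_1$, hence fixed by $(\id-C_{w'})^{-1}$ and killed by $R_{\Sig'_1}$) exactly as the paper does, and then inserts the resulting identity $((\id-C_{w'})^{-1}\id)|_{\Sig'_1}=(\id'-C'_{w'})^{-1}\id$ into Lemma \ref{qinfty}, where the support property $w'=0$ on $\Sig_1\setminus\Sig'_1$ truncates the contour.
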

\par
Set
\be
L'=L\backslash L_{\varplon}.
\ee
Then
\be
\Sig'_1=L'\cup \bar L'.
\ee
On $\Sig'_1$ set $\mu'=(1'-C'_{w'})^{-1}\id$. As in formula (\ref{bcsolve3}), it follows that
\be
N'^{(1')}(\lam)=\id+\int_{\Sig'_1}\frac{\mu'(\zeta)w'(\zeta)}{\zeta-\lam}\frac{d\zeta}{2\pi i}
\ee
solves the Riemann-Hilbert problem
\be
\left\{
\ba{cl}
N'^{(1')}_+(\lam)=N'^{(1')}_-(\lam)(J'_{N^{(1')}})_{x,t,\dta}(\lam),&\lam \in \Sig'_1\\
N'^{(1')}(\lam)\rightarrow \id&\lam\rightarrow \infty.
\ea
\right.
\ee
where
\be
w'=w'_++w'_-,
\ee
\be
b'_{\pm}=\id+w'_{\pm},
\ee
\be
(b'_{\pm})_{x,t,\dta}=\dta^{\hat \sig_3}e^{-it\tha \hat \sig_3}b'_{\pm},
\ee
\be
(J'_{N^{(1')}})_{x,t,\dta}=(b'_-)_{x,t,\dta}^{-1}(b'_+)_{x,t,\dta}.
\ee
From formulae (\ref{bpmdef}) and (\ref{N1'jump}) we can get that
\be
\left\{
\ba{lll}
b_+'=\left\{\ba{ll}\left(\ba{ll}1&(R)^1\\0&1\ea\right),&\lam<\lam_0\\
\left(\ba{ll}1&(R)^3\\0&1\ea\right),&\lam>\lam_0\ea\right.,
&b_-'=\left(\ba{ll}1&0\\0&1\ea\right),&\mbox{on } L'\\
b_+'=\left(\ba{ll}1&0\\0&1\ea\right),&b_-'=\left\{\ba{ll}\left(\ba{ll}1&(R)^2\\0&1\ea\right),&\lam<\lam_0\\
\left(\ba{ll}1&(R)^4\\0&1\ea\right),&\lam>\lam_0\ea\right.,
&\mbox{on } \bar L'
\ea
\right.
\ee
\subsection{The next step}
Extend $\Sig'_1$ to the contour
\be
\hat \Sig'_1=\{\lam=\lam_0+\lam_0ue^{\pm i\frac{\pi}{4}},u\in \R\}\cup \{\lam=\lam_0+\lam_0ue^{\pm i\frac{3\pi}{4}},u\in \R\}.
\ee

\par
Define the scaling operator
\[
N:L^2(\Sig_1)\rightarrow L^2(\Sig_1\backslash \lam_0)
\]
\be
f(\lam)\mapsto Nf(\lam)=f(\frac{\lam}{\sqrt{8t}}+\lam_0)
\ee
\par
Then we can get
\be
\tilde q(x,t)=\frac{1}{\sqrt{2\pi}}(8t)^{-i\nu}e^{4it\lam_0^2+2i\gam}(n_1^0)_{12}+o(t^{-\frac{1}{2}}).
\ee
where $\gam=\frac{1}{2\pi}\int_{-\infty}^{\lam_0}\log |\lam-\lam_0|d\log(1-\lam|\rho(\lam)|^2)$.
\par
We notice that in the neighborhood of the stationary point $\lam=\lam_0$, the function $\dta(\lam)$ appearing in the formula (\ref{dtasol}) can be represented as(\cite{diz})
\be \label{dtasolasyp}
\dta(\lam)_{\pm}=(\lam-\lam_0)^{i\nu}_{\pm}e^{\frac{i}{2\pi}\int_{-\infty}^{\lam_0}\log{|\lam'-\lam_0|d\log{(1-\lam'|\rho(\lam')|^2)}}}
\ee
where
\be
\nu=-\frac{1}{2\pi}\log(1-\lam_0|\rho(\lam_0)|^2)
\ee
and $(\lam-\lam_0)_{\pm}^{i\nu}$ denotes the boundary values of the corresponding multivalued function defined on the $\lam-$plane with the cut along $(-\infty,\lam]$.
\par
Then a straightforward computation shows that as $t\rightarrow \infty$
\be \label{asymptotic}
(N\dta^{\hat \sig_3} e^{-it\tha \hat \sig_3}[f])(\lam)\rightarrow \phi^{\hat \sig_3}\lam^{\nu i \hat \sig_3}e^{-i\frac{\lam^2}{4}\hat \sig_3}[f](\lam_0),
\ee
where
\be
\phi=(8t)^{-\frac{i\nu}{2}}e^{2it\lam_0^2}e^{\frac{i}{2\pi}\int_{-\infty}^{\lam_0}\log(\lam_0-\zeta)d\log(1-\zeta|\rho(\zeta)|^2)}
\ee
and
$[f](\lam_0)$ is defined by
$[f]=R+h_1$
where $R$ and $h_1$ are defined in (\ref{case}) in subsection \ref{first}.

\par
It follows from the exponential decay of $e^{-\frac{i\lam^2}{4}\hat \sig_3}[f](\lam_0)$, that the asymptotic formula in (\ref{asymptotic}) has an $L^1\cap L^2\cap L^{\infty}(\Sig_1-\lam_0)$ error of order $\frac{\log t}{t^{\frac{1}{2}}}$. Since $\phi$ is indenpendent of $\lam$, $N^0$ is the solution of the Riemann-Hilbert problem on $\Sig_1-\lam_0$,
\be \label{N0RHP}
\left\{
\ba{l}
N^0_+=N^0_-\lam^{i\nu \hat \sig_3}e^{-\frac{i\lam^2}{4}\hat \sig_3}[f](\lam_0),\\
N^0\rightarrow \id,\qquad \quad \mbox{as }\lam\rightarrow \infty.
\ea
\right.
\ee
if and only if $\phi^{\hat \sig_3}N^0$ is the solution of the Riemann-Hilbert problem for the jump matrix given by the right-hand side of (\ref{asymptotic}). Deforming the Riemann-Hilbert problem (\ref{N0RHP}) on $\Sig_1-\lam_0$ to the real axis we obtain the Riemann-Hilbert problem
\be \label{modelRHP}
\left\{
\ba{l}
N^0_+=N^0_-e^{-\frac{i\lam^2}{4}\hat \sig_3}\lam_-^{i\nu \sig_3}J(\lam_0)\lam_+^{-i\nu \sig_3},\quad \lam \in \R,\\
N^0\rightarrow \id \qquad \quad\mbox{as }\lam\rightarrow \infty.
\ea
\right.
\ee
which can be solved in closed form. This problem was first considered by Its, and the following calculations can be found in \cite{i} \cite{diz} \cite{dz}.
\par
Setting
\be
\psi(\lam)=N^0(\lam)\lam^{i\nu \sig_3}e^{-\frac{i\lam^2}{4}\sig_3},
\ee
we can represent the Riemann-Hilbert problem (\ref{modelRHP}) as
\be \label{modelRHP2}
\ba{l}
\psi_+(\lam)=\psi_-(\lam)[f](\lam_0),\qquad \lam\in\R,\\
\psi(\lam)\rightarrow \lam^{i\nu \sig_3}e^{-\frac{i\lam^2}{4}\sig_3},\quad \mbox{as }\lam\rightarrow \infty.
\ea
\ee
By differentiation we have
\be
(\frac{d\psi}{d\lam}+\frac{1}{2}i\lam\sig_3\psi)_+=(\frac{d\psi}{d\lam}+\frac{1}{2}i\lam\sig_3\psi)_-[f](\lam_0),\quad \lam\in\R,
\ee
Now as $\det{[f](\lam_0)}=1$, it follows by a Liouville argument that $\det{\psi}=1$. Hence, $\psi^{-1}$ exists and is bounded. But $(\frac{d\psi}{d\lam}+\frac{1}{2}i\lam\sig_3\psi)\psi^{-1}$ has no jump across $\R$ and must be entire. Also
\be
\ba{rl}
(\frac{d\psi}{d\lam}+\frac{1}{2}i\lam\sig_3\psi)\psi^{-1}=&(\frac{dN^0}{d\lam})\psi^{-1}+N^0(i\nu \sig_3\lam^{-1})(N^{0})^{-1}\\
&+N^0(-\frac{i}{2}\lam)\sig_3(N^0)^{-1}+(\frac{i\lam}{2}\sig_3N^0)(N^0)^{-1}\\
=&O(\lam^{-1})+\frac{1}{2}i\lam[\sig_3,N^0](N^0)^{-1}\\
=&O(\lam^{-1})+\frac{1}{2}i[\sig_3,N_1^0]
\ea
\ee
It follows by Liouville's argument that
\be \label{modelbaseq}
\frac{d\psi}{d\lam}+\frac{1}{2}i\lam\sig_3\psi=\beta \psi,
\ee
where
\be
\beta=\frac{i}{2}[\sig_3,N_1^0]=\left(\ba{cc}0&\beta_{12}\\\beta_{21}&0\ea\right).
\ee
In particular,
\be
(N_1^0)_{12}=-i\beta_{12}.
\ee
\par
Consider first $\im{\lam}>0$. From equation (\ref{modelbaseq}) we obtain
\be
\frac{d^2\psi_{11}^+}{d\lam^2}=(-\frac{\lam^2}{4}-\frac{i}{2}+\beta_{12}\beta_{21})\psi_{11}^+.
\ee
Setting
\be \label{modelbaseqdiff}
\psi_{11}^+(\lam)=g(e^{-\frac{3i\pi}{4}}\lam)
\ee
results in the reduction of equation (\ref{modelbaseqdiff}) to the Weber's equation \cite{whwa}.
\be
\frac{d^2g}{d\zeta^2}+(\frac{1}{2}-\frac{\zeta^2}{4}+a)g(\zeta)=0,
\ee
where $a=i\beta_{12}\beta_{21}$.
\par
Hence,
\be \label{psiweb}
\psi_{11}^+(\lam)=c_1D_a(e^{-\frac{3i\pi}{4}}\lam)+c_2D_a(-e^{-\frac{3i\pi}{4}}\lam)
\ee
where $D_a(\cdot)$ denotes the denotes the standard (entire) parabolic-cylinder function.
\par
From \cite{whwa}, we know that as $\zeta\rightarrow \infty$,
\be
\ba{l}
D_a(\zeta)=\zeta^ae^{-\frac{1}{4}\zeta^2}(1+O(\zeta^{-2})),\qquad |\arg \zeta|<\frac {3\pi}{4},\\
=\zeta^ae^{-\frac{1}{4}\zeta^2}(1+O(\zeta^{-2}))-\frac{\sqrt{2\pi}}{\Gam(-a)}e^{a\pi i}\zeta^{-a-1}e^{\frac{1}{4}\zeta^2}(1+O(\zeta^{-2})),\qquad \frac{\pi}{4}<\arg \zeta<\frac{5\pi}{4},\\
=\zeta^ae^{-\frac{1}{4}\zeta^2}(1+O(\zeta^{-2}))-\frac{\sqrt{2\pi}}{\Gam(-a)}e^{-a\pi i}\zeta^{-a-1}e^{\frac{1}{4}\zeta^2}(1+O(\zeta^{-2})),\qquad -\frac{5\pi}{4}<\arg \zeta<-\frac{\pi}{4},
\ea
\ee
where $\Gam$ is the Gamma function.
\par
Setting $\lam=e^{\frac{3\pi i}{4}}\sig$, with $\sig>0$, comparing the right-hand side and the left-hand side of (\ref{psiweb}), we conclude that
\be
c_2=0,\quad a=i\nu \quad \mbox{and }c_1=e^{-\frac{3\pi \nu}{4}},
\ee
so that
\be
\psi_{11}^+(\lam)=e^{-\frac{3\pi \nu}{4}}D_a(e^{-\frac{3i\pi}{4}}\lam),\qquad \im \lam>0.
\ee
From equation (\ref{modelbaseq}) we learn that
\be
\ba{ll}
\psi_{21}^+(\lam)&=\frac{1}{\beta_{12}}(\frac{d\psi_{11}}{d\lam}+\frac{1}{2}i\lam\psi_{11})\\
&=\frac{1}{\beta_{12}}e^{-\frac{3\pi \nu}{4}}(\frac{d}{d\lam}(D_a(e^{-\frac{3i\pi}{4}}\lam))+\frac{1}{2}i\lam D_a(e^{-\frac{3i\pi}{4}}\lam)),\im \lam>0.
\ea
\ee
\par
In a similar way we find that
\be
\psi_{22}^+(\lam)=e^{\frac{\pi \nu}{4}}D_{-a}(e^{-\frac{i\pi}{4}}\lam),\qquad \im \lam>0,
\ee
and
\be
\psi_{12}^+(\lam)=\frac{1}{\beta_{21}}e^{\frac{\pi \nu}{4}}(\frac{d}{d\lam}(D_{-a}(e^{-\frac{i\pi}{4}}\lam))-\frac{1}{2}i\lam D_{-a}(e^{-\frac{i\pi}{4}}\lam)),\im \lam>0.
\ee
\par
Repeating these calculations for $\im \lam<0$ yields
\bea
&&\psi_{11}^-(\lam)=e^{\frac{\pi \nu}{4}}D_a(e^{\frac{i\pi}{4}}\lam),\\
&&\psi_{21}^-(\lam)=\frac{1}{\beta_{12}}e^{\frac{\pi \nu}{4}}(\frac{d}{d\lam}(D_a(e^{\frac{i\pi}{4}}\lam))+\frac{1}{2}i\lam D_a(e^{\frac{i\pi}{4}}\lam)),\\
&&\psi_{22}^-(\lam)=e^{-\frac{3\pi \nu}{4}}D_{-a}(e^{\frac{3i\pi}{4}}\lam),\\
&&\psi_{12}^-(\lam)=\frac{1}{\beta_{21}}e^{-\frac{3\pi \nu}{4}}(\frac{d}{d\lam}(D_{-a}(e^{\frac{3i\pi}{4}}\lam))-\frac{1}{2}i\lam D_{-a}(e^{\frac{3i\pi}{4}}\lam)).
\eea
Substituting these formulae in (\ref{modelRHP2}) gives
\be
(\psi_-)^{-1}\psi_+=f(\lam_0)=\left(\ba{cc}1-\lam_0|\rho(\lam_0)|^2&-\ol{\rho(\lam_0)}\\\lam_0\rho(\lam_0)&1\ea \right)
\ee
then we can obtain
\be
\ba{rl}
\lam_0\rho(\lam_0)=&\psi_{11}^{-}\psi_{21}^+-\psi_{21}^-\psi_{11}^+\\
=&(e^{\frac{\pi \nu}{4}}D_a(e^{\frac{i\pi}{4}}\lam))(\frac{1}{\beta_{12}}e^{-\frac{3\pi \nu}{4}}(\frac{d}{d\lam}(D_a(e^{-\frac{3i\pi}{4}}\lam))+\frac{1}{2}i\lam D_a(e^{-\frac{3i\pi}{4}}\lam)))\\
&-\frac{1}{\beta_{12}}e^{\frac{\pi \nu}{4}}(\frac{d}{d\lam}(D_a(e^{\frac{i\pi}{4}}\lam))+\frac{1}{2}i\lam D_a(e^{\frac{i\pi}{4}}\lam))e^{-\frac{3\pi \nu}{4}}D_a(e^{-\frac{3i\pi}{4}}\lam)\\
=&\frac{e^{-\frac{\pi \nu}{2}}}{\beta_{12}}W(D_a(e^{\frac{i\pi}{4}}\lam)),D_a(e^{-\frac{3i\pi}{4}}\lam))\\
=&\frac{\sqrt{2\pi}e^{i\frac{\pi}{4}}e^{-\frac{\pi \nu}{2}}}{\beta_{12}\Gam(-a)}.
\ea
\ee
where $W(f,g)=fg'-f'g$ is the Wronskian of $f$ and $g$.
\par
Thus
\be
\beta_{12}=\frac{\sqrt{2\pi}e^{i\frac{\pi}{4}}e^{-\frac{\pi \nu}{2}}}{\lam_0\rho(\lam_0)\Gam(-a)}
\ee

\par
Finally, careful bookkeeping of the error terms and the Lemma 8.1 in \cite{avkahv} yields the following result.
\begin{theorem}
Let $q(x,t)$ be the solution of the Cauchy problem (\ref{DNLSandIv}). Then as $t\rightarrow \infty$
\be
q(x,t)=q_{as}(x,t)+O(\frac{\log t}{t})
\ee
where
\be
\ba{l}
q_{as}=\frac{1}{\sqrt{t}}\alpha(\lam_0)e^{\frac{ix^2}{4t}-i\nu(\lam_0)\log t},\\
|\alpha(\lam_0)|^2=\frac{\nu(\lam_0)}{2}=-\frac{1}{4\pi}\log(1-\lam_0|\rho(\lam_0)|^2),\\
\arg \alpha(\lam_0)=-3\nu\log 2-\frac{\pi}{4}+\arg \Gam(i\nu)-\arg r(\lam_0)+\frac{1}{\pi}\int_{-\infty}^{\lam_0}\log|\lam-\lam_0|d\log(1-\lam|\rho(\lam)|^2),\\
\lam_0=-\frac{x}{4t}.
\ea
\ee
\end{theorem}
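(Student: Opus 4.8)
The plan is to combine the closed-form solution of the model problem with the error estimates accumulated in the previous three subsections, and then to translate the resulting formula for $\tilde q$ into the claimed asymptotics for $q$. By this stage the substantive analysis is complete: the reduction in \eqref{qintegral} has localized the computation to a neighborhood of $\lam_0$, the scaling operator $N$ has converted the localized problem to the model Riemann-Hilbert problem \eqref{modelRHP}, and that model problem has been solved explicitly through parabolic cylinder functions, yielding $(N_1^0)_{12}=-i\beta_{12}$ with $\beta_{12}$ given in terms of $\lam_0\rho(\lam_0)$, $\Gam(-a)$ and $a=i\nu$. First I would substitute this value of $\beta_{12}$ into the representation $\tilde q=\frac{1}{\sqrt{2\pi}}(8t)^{-i\nu}e^{4it\lam_0^2+2i\gam}(n_1^0)_{12}+o(t^{-1/2})$, so that the leading term becomes fully explicit.

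The next step is to simplify the resulting phase. Using $\lam_0=-x/(4t)$ one has $4t\lam_0^2=x^2/(4t)$, producing the factor $e^{ix^2/(4t)}$; splitting $(8t)^{-i\nu}=e^{-3i\nu\log 2}\,t^{-i\nu}=e^{-3i\nu\log 2}e^{-i\nu\log t}$ produces both the oscillatory factor $e^{-i\nu\log t}$ and the constant $-3\nu\log 2$ in $\arg\alpha$. The remaining constant phases are assembled from the $e^{i\pi/4}$ and $e^{-\pi\nu/2}$ appearing in $\beta_{12}$, the prefactor $-i$ in $(N_1^0)_{12}=-i\beta_{12}$, the factor $1/(\lam_0\rho(\lam_0))$ (which contributes $-\arg r(\lam_0)$), and the factor $1/\Gam(-a)=1/\Gam(-i\nu)$ (which contributes $+\arg\Gam(i\nu)$, since $\Gam(-i\nu)=\ol{\Gam(i\nu)}$); the term $e^{2i\gam}$ supplies exactly $\frac{1}{\pi}\int_{-\infty}^{\lam_0}\log|\lam-\lam_0|\,d\log(1-\lam|\rho(\lam)|^2)$. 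Collecting these gives the stated $\arg\alpha(\lam_0)$; in particular $e^{i\pi/4}$ together with $-i$ yields $e^{-i\pi/4}$, accounting for the $-\pi/4$.

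For the modulus I would invoke the classical identity $|\Gam(i\nu)|^2=\pi/(\nu\sinh\pi\nu)$ together with the defining relation $1-\lam_0|\rho(\lam_0)|^2=e^{-2\pi\nu}$ coming from $\nu=-\frac{1}{2\pi}\log(1-\lam_0|\rho(\lam_0)|^2)$. Writing $e^{-\pi\nu}\sinh\pi\nu=\tfrac12(1-e^{-2\pi\nu})=\tfrac12\lam_0|\rho(\lam_0)|^2$ collapses the Gamma factor, the $\lam_0\rho(\lam_0)$ in the denominator, and the scaling constants into the clean value $|\alpha(\lam_0)|^2=\nu/2=-\frac{1}{4\pi}\log(1-\lam_0|\rho(\lam_0)|^2)$. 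Finally I would pass from $\tilde q$ to $q$ through \eqref{solq}; since $|q|=|\tilde q|$ the modulus is unchanged, and the additional phase is absorbed into the bookkeeping of $\arg\alpha$.

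It remains to certify the error. Each transformation contributes a controlled term: the rational approximation of Proposition \ref{summtec} and the truncation to $\Sig_1'$ give contributions of order $\lam_0\tau^{-l}$ with $l$ arbitrarily large, plus exponentially small remainders on $L_\eps\cup\bar L_\eps$, while the replacement of the scaled jump by the model jump in \eqref{asymptotic} carries an $L^1\cap L^2\cap L^\infty$ error of order $\log t/\sqrt t$. After the overall $t^{-1/2}$ scaling inherent in recovering $\tilde q$ from the residue at infinity, this last error dominates and produces the $O(\log t/t)$ term. I expect the principal obstacle to be precisely this bookkeeping: tracking the $\lam_0$-dependent scaling constants and the unimodular phase factors through every reduction so that the modulus identity closes to $\nu/2$ and the several constant phases sum exactly to the stated $\arg\alpha$, while simultaneously confirming that no intermediate error exceeds $\log t/t$.
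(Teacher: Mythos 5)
Your proposal follows the same route as the paper---necessarily so, since it assembles the model-problem solution, the phase factors and the error estimates that Section 4 has already prepared---but it has one genuine gap, and it sits exactly at the step the paper does not do by hand: the passage from $\tilde q$ to $q$. You write that since $|q|=|\tilde q|$ the modulus is unchanged ``and the additional phase is absorbed into the bookkeeping of $\arg\alpha$.'' The phase in (\ref{solq}) is $\exp\left(-2i\int_{+\infty}^{x}|\tilde q(y,t)|^2\,dy\right)$, which is not a constant: inserting the leading asymptotics $|\tilde q(y,t)|^2\approx|\alpha(\lam_0(y))|^2/t$ with $\lam_0(y)=-y/(4t)$ and changing variables $u=-y/(4t)$ turns it into $\exp\bigl(8i\int_{-\infty}^{\lam_0}|\alpha(u)|^2\,du\bigr)$, a nontrivial function of $\lam_0$ that does not vanish. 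Worse, your own phase accounting is inconsistent with ``absorbing'' it: the factors you list ($e^{i\pi/4}$ and $-i$ giving $-\pi/4$; $(8t)^{-i\nu}$ giving $-3\nu\log 2$ and $-i\nu\log t$; $1/\Gam(-i\nu)$ giving $+\arg\Gam(i\nu)$; $1/(\lam_0\rho(\lam_0))$ giving $-\arg r(\lam_0)$; $e^{2i\gam}$ giving the integral term) already exhaust every term in the stated $\arg\alpha(\lam_0)$, so the extra phase from (\ref{solq}) has nowhere to go. To close this step one must (i) establish the asymptotics of $\tilde q(y,t)$ uniformly in $y\in[x,\infty)$, i.e.\ uniformly over the whole range of stationary points $u\le\lam_0$ including $u\to-\infty$, with errors integrable on the half-line (the estimates of Lemma \ref{techlema} and the reduction (\ref{qintegral}) are pointwise in $\lam_0$ and restricted to $\lam_0<M$), and (ii) evaluate the resulting integral explicitly. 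This is precisely what the paper outsources to Lemma 8.1 of \cite{avkahv}; your proposal replaces that lemma with a phrase.

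A second piece of deferred bookkeeping does not close as cleanly as you claim. Carrying out your modulus computation: $|\Gam(i\nu)|^2=\pi/(\nu\sinh\pi\nu)$ and $2e^{-\pi\nu}\sinh\pi\nu=1-e^{-2\pi\nu}=\lam_0|\rho(\lam_0)|^2$ give $|\beta_{12}|^2=2\pi e^{-\pi\nu}/\bigl(\lam_0^2|\rho(\lam_0)|^2|\Gam(i\nu)|^2\bigr)=\nu/\lam_0$, not $\nu$; combined with the $1/\sqrt{8t}$ scaling and the factor $2$ from $\tilde q=2i\lim_{\lam\to\infty}(\lam N)_{12}$, this yields $t|\tilde q|^2\to\nu/(2\lam_0)$, which differs from the stated $\nu/2$ by a factor of $\lam_0$. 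Either that factor is supplied by scaling constants you have not tracked, or the stated constant itself requires correction; in either case ``collapses into the clean value $\nu/2$'' is an assertion rather than a computation, and it is exactly the $\lam_0$-dependent bookkeeping that you yourself identify as the principal obstacle but do not resolve.
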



\begin{thebibliography}{XXXX}

\bibitem{m} E. Mjolhus,{\em On the modulational instability of hydromagnetic waves parallel to the magnetic field}, J. Plasma Phys. {\bf 16}(1976), 321-334.

\bibitem{k} Y. Kodama, {\em Optical solitons in a monomode fiber}, J. Stat. Phys. {\bf 39} (1985), 597-614.

\bibitem{a} G. P. Agrawal, {\em Nonlinear Fiber Optics}, Academic Press, 2007.

\bibitem{l} J. Lenells, {\em The derivative nonlinear Schr\"odinger equation on the half-line}, Physica D {\bf 237}(2008), 3008--3019.

\bibitem{bc} R. Beals and R. Coifman, {\em Scattering and inverse scattering for first order systems},  Comm. in Pure and Applied Math. {\bf 37}(1984), 39--90.

\bibitem{dz} P. Deift and  X. Zhou, {\em A steepest descent method for oscillatory Riemann--Hilbert problems},  Ann. of Math. (2) {\bf 137}(1993), 295-368.

\bibitem{i} A. R. Its, {\em Asymptotics of solutions of the nonlinear Schr\"odinger equation and isomonodromic deformations of systems of linear differential equations }, Sov. Math. Dokl. {\bf 24} (1981), 452-456.

\bibitem{diz} P. A. Deift, A. R. Its, and X. Zhou, {\em Long-time asymptotics for integrable nonlinear wave equations},  in ``Important developments in soliton theory'', 181-204, Springer Ser. Nonlinear Dynam., Springer, Berlin, 1993.

\bibitem{kn} D. J. Kaup and A. C. Newell, {\em An exact solution for a derivative nonlinear Schr\"odinger equation}, J. Math. Phys. {\bf 19}(1978), 789-801.

\bibitem{ki} T. Kawata and H. Inoue, {\em Exact solutions of the derivative nonlinear Schr\"odinger equation under the nonvanishing conditions}, J. Phys. Soc. Japan {\bf 44}(1978), 1968-1976.

\bibitem{ikws} Y. H. Ichikawa, K. Konno, M. Wadati and H. Sanuki, {\em Spiky soliton in circular polarized Alfve wave}, J. Phys. Soc. Japan {\bf 48}(1980), 279-286.

\bibitem{vml} V. M. Lashkin, {\em N-soliton solutions and perturbation theory for the derivative nonlinear Schr\"odinger equation with
           nonvanishing boundary conditions}, J. Phys. A, {\bf 40}(2007), 6119-6132.

\bibitem{mz}  W. X. Ma and R. G. Zhou, {\em On inverse recursion operator and tri-Hamiltonian formulation for a Kaup-Newell system of DNLS equations}, J. Phys. A {\bf 32}(1999), L239-L242.

\bibitem{xhw} S. W. Xu, Jingsong He and Lihong Wang, {\em The Darboux transformation of the derivative nonlinear Schr\"odinger equation} J. Phys. A {\bf 44}(2011), 305203-305225.

\bibitem{fan} E. G. Fan, {\em Darboux transformation and soliton-like solutions for the Gerdjikov-Ivanov equation}, J. Phys. A, {\bf 33}(2000), 6925-6933.

\bibitem{zs} V. E. Zakharov and A. Shabat, {\em A scheme for integrating the nonlinear equations of mathematical physics by the method of the inverse scattering problem,I and II}, Funct. Anal. Appl. {\bf 8}(1974), 226-235 and {\bf 13}(1979), 166-174.

\bibitem{whwa} E.T. WHITTAKER and G.N. WATSON, {\em A Course of Modern Analysis}, 4th ed., Cambridge University Press, Cambridge, 1927.

\bibitem{pdl} P. D. Lax, {\em Integrals of nonlinear equations of evolution and solitary waves}, Comm. Pure. Appl. Math.{\bf 21}(1968), 467-490.

\bibitem{avkahv} A. V. Kitaev and A. H. Vartanian, {\em Leading-order temporal asymptotics of the modified nonlinear Schr\"odinger equation:solitonless sector.}, Inverse Problems {\bf 13}(1997),1311-1339.

\bibitem{chen2}  H. H. Chen, Y. C. Lee and C. S. Liu, {\em Integrability of nonlinear Hamiltonian systems by inverse scattering method}, Phys. Scr. 20(1979), 490-492.

\bibitem{kundu}  A. Kundu, W. Strampp and W. Oevel, {\em Gauge transformations of constrained KP flows: new integrable hierarchies}, J. Math. Phys. 36(1995), 2972-2984.

\bibitem{fan2} E. G. Fan, {\em A family of completely integrable multi-Hamiltonian systems explicitly related to some celebrated equations}, J. Math. Phys. 42(2001), 4327-4344.

\bibitem{xf} J. Xu and E. G. Fan, {\em The derivative nonlinear Schr\"odinger equation on the interval}, arXiv:1205.1559.

\end{thebibliography}
\end{document}